\newtheorem{theorem}{Theorem}[section]
\newtheorem{lemma}[theorem]{Lemma}
\newtheorem{definition}[theorem]{Definition}
\newtheorem{conjecture}{Conjecture}
\newcommand{\XXX}{\textcolor{red}{XXX}}
\newcommand{\demand}{\mathrm{dem}}
\newcommand{\ball}{\mathrm{ball}}
\newcommand{\len}{\mathrm{len}}
\newcommand{\eps}{\varepsilon}
\newcommand{\Hinv}{\overleftarrow{H}}
\newcommand{\Ginv}{\overleftarrow{G}}
\newcommand{\Qinv}{\overleftarrow{Q}}
\newcommand{\PathQ}{{\sc Path-Quasipartition}}
\newcommand{\dd}{\mathrm{d}}
\newcommand{\etal}{\textit{et al}.}
\newcommand{\spcut}{\text{Sparsest Cut }}
\newcommand{\mcut}{\text{Multicut }}
\newcommand{\mflow}{\mathsf{maxflow }}
\newcommand{\gap}{\mathsf{gap }}
\newcommand{\nptime}{\textsf{NP }}
\newcommand{\dtime}{\textsf{DTIME }}
\newcommand{\zpptime}{\textsf{ZPP }}
\title{Embeddings of Planar Quasimetrics into Directed $\ell_1$\\ and Polylogarithmic Approximation for Directed Sparsest-Cut}
\author{
Ken-ichi Kawarabayashi\thanks{National Institute of Informatics, 2-1-2, Hitotsubashi, Chiyoda-ku, Tokyo, Japan, \texttt{k\_keniti@nii.ac.jp}.  Supported by JSPS Kakenhi JP18H05291 and by JSPS Kakenhi JP20A402.}
\and
Anastasios Sidiropoulos\thanks{Dept.~of Computer Science, University of Illinois at Chicago, \texttt{sidiropo@uic.edu}.}
}
\begin{document}

\maketitle

\begin{abstract}
The multi-commodity flow-cut gap is a fundamental parameter that affects the performance of several divide \& conquer algorithms,
and has been extensively studied for various classes of undirected graphs.
It has been shown by Linial, London and Rabinovich \cite{linial1994geometry} and by Aumann and Rabani \cite{aumann1998log} that for general $n$-vertex graphs it is bounded by $O(\log n)$ and the Gupta-Newman-Rabinovich-Sinclair conjecture \cite{gupta2004cuts} asserts that it is $O(1)$ for any family of graphs that excludes some fixed minor.

We show that the multicommodity flow-cut gap on \emph{directed} planar graphs is  $O(\log^3 n)$.
This is the first \emph{sub-polynomial} bound for any family of directed graphs of super-constant treewidth.
We remark that for general directed graphs, it has been shown by Chuzhoy and Khanna \cite{chuzhoy2009polynomial} that the gap is $\widetilde{\Omega}(n^{1/7})$, even for directed acyclic graphs.

As a direct consequence of our result, we also obtain the first polynomial-time polylogarithmic-approximation algorithms for the Directed Non-Bipartite Sparsest-Cut, and the Directed Multicut problems for directed planar graphs, which extends the long-standing result for undirectd planar graphs by Rao \cite{Rao99} (with a slightly weaker bound). 

At the heart of our result we investigate low-distortion quasimetric embeddings into \emph{directed} $\ell_1$. 
More precisely, we construct $O(\log^2 n)$-Lipschitz quasipartitions for the shortest-path quasimetric spaces of planar digraphs,
which generalize the notion of Lipschitz partitions from the theory of metric embeddings.
This construction combines ideas from the theory of bi-Lipschitz embeddings, with tools form data structures on directed planar graphs.
\end{abstract}

\section{Introduction}

The multi-commodity flow-cut gap has been proven instrumental in the design of routing and divide \& conquer algorithms in graphs.
Bounds on this parameter generalize the max-flow/min-cut theorem, and lead to deep connections between algorithm design, graph theory, and geometry \cite{linial1994geometry,aumann1998log,arora2008euclidean}.

While the flow-cut gap for several classes of undirected graphs has been studied extensively, the case of directed graphs is poorly understood. In this work we make progress towards overcoming this limitation by showing that the flow-cut gap on $n$-vertex planar digraphs is $O(\log^3 n)$.
This is the first \emph{sub-polynomial} bound on any family of digraphs of super-constant treewidth (not directed treewidth but the abstract graphs ignoring directions are of constant treewidth).
We contrast our result with the strong lower bound due to Chuzhoy and Khanna \cite{chuzhoy2009polynomial}, who showed that for general directed graphs the gap is $\widetilde{\Omega}(n^{1/7})$, even for directed acyclic graphs. 
Thus it is highly natural to deal with planar directed graphs. 
Note that there is a long standing conjecture that for undirected planar graphs the flow-cut gap is  $O(1)$ (we will mention more details later). There is no progress for a long time. So it seems hard to improve our flow-cut gap to $O(1)$ (this improvement would imply the above mentioned conjecture). 

In order to prove our main results, we investigate low-distortion metric embeddings into directed $\ell_1$. 
More precisely, our result is obtained by proving an equivalent result in the theory of bi-Lipschitz quasimetric embeddings, which are mappings that generalize the standard theory of bi-Lipschitz metric embeddings to the asymmetric case.
The distortion bound of our quasimetric embedding implies the same bound for the standard LP relaxations of various cut problems on directed graphs.
Therefore, as a direct consequence, we also obtain the first polynomial-time polylogarithmic-approximation  algorithms for several cut problems on directed graphs, including Directed Non-Bipartite Sparsest-Cut, and Directed Multicut for directed planar graphs.

\subsection{Multi-commodity flow-cut gaps for undirected graphs}
A \emph{multi-commodity flow} instance in an undirected graph $G$ is defined by two non-negative functions: $c \colon E(G) \to \mathbb{R}$ and $d \colon V(G) \times V(G) \to \mathbb{R}$. We refer to $c$ and $d$ as the \emph{capacity} and \emph{demand} functions respectively. The \emph{maximum concurrent flow} is the maximal value $\varepsilon$ such that for every $u,v \in V(G)$, $\varepsilon\cdot d(u,v)$ can be simultaneously routed between $u$ and $v$, without violating the edge capacities. We refer to this value as $\mflow(G,c,d)$.

For every $S \subseteq V(G)$, the \emph{sparsity} of $S$ is defined as follows:
\[
\frac{\sum_{(u,v)\in E(G)} c(u,v) |\mathbf{1}_S(u) - \mathbf{1}_S(v)|}{\sum_{u,v\in V(G)} d(u,v) |\mathbf{1}_S(u) - \mathbf{1}_S(v)|},
\]
where $\mathbf{1}_S \colon V(G) \to \{0,1\}$ is the indicator for membership in $S$. The sparsity of a cut is a natural upper bound for $\mflow(G,c,d)$. The \emph{multi-commodity max-flow min-cut gap} for $G$, denoted by $\gap(G)$, is the maximum gap between the value of the flow and the upper bound given by the sparsity formula, over all multi-commodity flow instances on $G$.
The flow-cut gap on undirected graphs has been studied extensively, and several upper and lower bounds have been obtained for various graph classes. The gap is referred to as the \emph{uniform} multi-commodity flow-cut gap for the special case where there is a unit demand between every pair of vertices. 
Leighton and Rao \cite{leighton1999multicommodity} showed that the uniform flow-cut gap is $\Theta(\log n)$ in undirected graphs.
Subsequently Linial, London and Rabinovich \cite{linial1994geometry} and Aumann and Rabani \cite{aumann1998log} showed that the non-uniform multi-commodity flow-cut gap for the \spcut problem with $k$ demand pairs is upper bounded by $O(\log k)$.

The flow-cut gap immediately implies a polynomial-time approximation algorithm for Sparsest-Cut, with approximation ratio equal to the gap.
For general graphs, improved approximation algorithms have been obtained via semidefinite programming relaxations. This approach, pioneered by Arora, Rao and Vazirani \cite{ARV09}, leads to $O(\sqrt{\log k})$-approximation for the uniform case, and has been extended to $O(\sqrt{\log k} \log\log k)$-approximation for the general case by Arora, Lee and Naor \cite{arora2008euclidean}.
The latter approach relies upon embeddings of metric spaces of negative type into $\ell_1$.

Besides these there are various studies of the flow-cut gap for specific graph families. 
A central conjecture posed by Gupta, Newman, Rabinovich, and Sinclair in \cite{gupta2004cuts} asserts the following.

\begin{conjecture}[GNRS Conjecture \cite{gupta2004cuts}] \label{conj:gnrs}
The multi-commodity flow-cut gap on any family ${\cal F}$ of graphs is $O(1)$ if and only if ${\cal F}$ forbids some fixed minor.
\end{conjecture}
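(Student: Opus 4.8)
\medskip
\noindent\textbf{Proof proposal.}\
The statement is an ``if and only if'' whose two directions are of entirely different character, and I would treat them separately. The \emph{only if} direction is the easy one. If ${\cal F}$ forbids no fixed minor, then every graph $H$ is a minor of some member of ${\cal F}$; since the flow-cut gap is monotone under taking minors --- deleting an edge only removes capacity or demand, and contracting an edge of very large capacity leaves the value of every concurrent-flow instance and the sparsity of every relevant cut essentially unchanged --- we get $\sup_{G\in{\cal F}}\gap(G)\ge\gap(H)$ for all $H$. Taking $H$ to be a bounded-degree expander on $m$ vertices, the lower bound of Leighton and Rao \cite{leighton1999multicommodity} gives $\gap(H)=\Omega(\log m)$, so the gap on ${\cal F}$ is unbounded. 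The only point needing care here is the ``large capacity'' surrogate for contracted edges; otherwise this is routine.

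The \emph{if} direction is the real content, and I would route it entirely through metric embeddings. By Linial, London and Rabinovich \cite{linial1994geometry} and Aumann and Rabani \cite{aumann1998log}, the flow-cut gap of ${\cal F}$ is $O(1)$ precisely when every shortest-path metric of a weighted graph drawn from ${\cal F}$ embeds into $\ell_1$ with $O(1)$ distortion. So it suffices to prove, for every fixed $t$, that the shortest-path metric of every $K_t$-minor-free graph embeds into $\ell_1$ with distortion $f(t)$ depending on $t$ only. I would then peel off the purely graph-structural part: bounded-distortion $\ell_1$-embeddability is closed, up to a constant factor, under clique-sums over cliques of bounded size, and under adding $O_t(1)$ apex vertices or $O_t(1)$ bounded-width vortices; combined with the Robertson--Seymour structure theorem and the standard trick of cutting a bounded-genus graph along a bounded number of shortest noncontractible cycles, this reduces the whole problem (losing only $O_t(1)$ factors) to a single statement: \emph{the shortest-path metric of every planar graph embeds into $\ell_1$ with $O(1)$ distortion.}

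Everything therefore rests on the planar case, and this is the step I expect to be the genuine obstacle --- it is, after all, a notorious open problem on its own. For it I would try to assemble the embedding from a scale-indexed hierarchy of random partitions, in the spirit of the measured-descent analysis underlying Rao's $O(\sqrt{\log n})$ bound \cite{Rao99} and the $O(1)$ bound known for $k$-outerplanar graphs. The core difficulty --- and the reason the planar bound has been stuck at $O(\sqrt{\log n})$ --- is the cross-scale interaction: the standard argument charges each demand pair at all $\Theta(\log n)$ scales lying between its endpoints, and $\ell_2$-type averaging over scales brings this down only to $\sqrt{\log n}$. Obtaining $O(1)$ appears to require a fundamentally new decomposition of planar metrics in which each demand pair is charged to only $O(1)$ scales --- morally, a Lipschitz partition whose Lipschitz parameter is an absolute constant rather than $O(\log n)$.

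Such an object would be far stronger than, though in the same spirit as, the $O(\log^2 n)$-Lipschitz quasipartitions constructed in this paper for \emph{directed} planar metrics, and I do not see how to produce it with current techniques: the planarity is used in existing arguments only locally (inside a single scale, via the Klein--Plotkin--Rao decomposition), whereas what is needed is a global, ``all scales at once'' statement. Absent such a decomposition, the best I could honestly propose is the conditional reduction above --- the easy direction in full, and the hard direction modulo the planar $\ell_1$-embedding statement, which is exactly the content that keeps Conjecture~\ref{conj:gnrs} open.
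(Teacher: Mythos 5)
The statement you have been asked to ``prove'' is Conjecture~\ref{conj:gnrs}, and the paper contains no proof of it --- it is stated precisely as an open conjecture (and remains open; the paper's Theorem~\ref{thm:gap} is a polylogarithmic bound for the directed planar case, not a constant bound, and the authors even remark that obtaining $O(1)$ in Theorem~\ref{thm:gap} would already resolve Conjecture~\ref{conj:planar}). So there is nothing in the paper to compare your argument against, and no complete proof of the statement is expected from you either.

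Given that, your write-up is an honest and largely accurate account of the state of the problem. The ``only if'' direction you give is correct and standard: flow-cut gap is minor-monotone (with the usual large-capacity surrogate for contracted edges), and if ${\cal F}$ forbids no fixed minor then bounded-degree expanders on $m$ vertices arise as minors, whence the $\Omega(\log m)$ lower bound of \cite{leighton1999multicommodity} applies. The ``if'' direction is where you should be a bit more careful about what is actually established. The equivalence between $O(1)$ flow-cut gap and $O(1)$-distortion $\ell_1$-embeddability of shortest-path metrics (via \cite{linial1994geometry,aumann1998log}) is correct. But the reduction you sketch from general $K_t$-minor-free graphs to the planar case --- via Robertson--Seymour structure, clique-sums, apices, and vortices, each claimed to lose only an $O_t(1)$ factor in $\ell_1$ distortion --- is itself not a routine lemma. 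Closure under bounded clique-sums is known, and bounded-genus reduces to planar, but handling apex vertices and especially vortices with only constant multiplicative loss in $\ell_1$ distortion is not established in full generality; whether GNRS reduces cleanly to the planar embedding conjecture is a separate open question, not a black-box fact. You are right that the planar case is the core obstruction, and right that it would require a scale-coherent (``charged at $O(1)$ scales'') decomposition well beyond what the $O(\log^2 n)$-Lipschitz quasipartitions of this paper provide. In short: this is a conjecture, you correctly treat it as one, and the only substantive overclaim in your sketch is the casual $O_t(1)$-loss reduction through the minor structure theorem.
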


Arguably one of the most interesting cases of Conjecture 1, which is still open, is the case of planar graphs, which is often referred to as the planar embedding conjecture:

\begin{conjecture}[Planar Embedding Conjecture] \label{conj:planar}
The multi-commodity flow-cut gap on planar graphs is $O(1)$.
\end{conjecture}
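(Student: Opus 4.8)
\medskip
\noindent\textbf{A plan of attack.}
The plan is to pass through the standard duality between flow-cut gaps and $\ell_1$-embeddings. By the results of Linial, London and Rabinovich \cite{linial1994geometry} and Aumann and Rabani \cite{aumann1998log}, for every graph $G$ the gap $\gap(G)$ is within a universal constant factor of $c_1(G)$, the supremum over all shortest-path metrics $d$ of nonnegative edge weights on $G$ of the least distortion of an embedding of $(V(G),d)$ into $\ell_1$. Thus Conjecture~\ref{conj:planar} is equivalent to the purely geometric statement that every metric arising as a shortest-path metric of a weighted planar graph embeds into $\ell_1$ with distortion $O(1)$, and the whole problem reduces to constructing such an embedding, uniformly in $n$.

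For the construction I would adopt the ``cut measure'' viewpoint: an embedding of $(V(G),d)$ into $\ell_1$ with distortion $D$ is, up to a universal constant, a probability distribution over vertex subsets $S\subseteq V(G)$ with $d(u,v)/D \le \Pr[S\ \text{separates}\ u,v] \le d(u,v)$ for every pair $u,v$ --- equivalently, a fractional packing of cut semimetrics of ``rate'' $D$. The canonical supply of structured cuts on a planar graph is the Klein--Plotkin--Rao decomposition, which by deleting $O(1/\delta)$ concentric shortest paths produces clusters of diameter $O(\delta\cdot\mathrm{diam}(G))$ and hence $O(1)$-padded random partitions at a single scale. Running this over the $O(\log n)$ relevant scales and assembling the coordinates by Rao's technique \cite{Rao99} yields distortion $O(\sqrt{\log n})$, the current record; the content of the conjecture is exactly the removal of this residual factor.

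The hard part --- and the reason the conjecture has resisted attack --- is precisely this scale accumulation: single-scale padded decompositions are lossy, and in the Euclidean assembly each scale contributes an essentially independent additive term, so one pays $\sqrt{\#\text{scales}}$. An $O(1)$ bound cannot treat scales independently; one needs a single global distribution over cuts that simultaneously certifies every distance. The most promising route is to work in the planar dual, where cuts correspond to unions of cycles, so that a cut packing becomes a fractional packing of dual cycles of a prescribed length distribution, and to show that the natural linear program --- separate each pair by a random dual cycle whose length is distributed like $d$ --- has an $O(1)$ rounding gap. Making this rigorous requires understanding how the shortest-path metric restricts to faces and to the cycle boundaries created by recursive separation, and how that recursion interacts with congestion; I expect that interaction term, rather than any single-scale estimate, to be the principal obstacle. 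This is consistent with the known special cases --- isometric embeddability of outerplanar metrics via Okamura--Seymour, distortion $2$ for series-parallel graphs \cite{gupta2004cuts}, and $O(1)$ distortion for bounded pathwidth --- all of which succeed exactly because the relevant ``width'' of the recursion is bounded; a full proof would have to keep that width effectively bounded uniformly in $n$, or bypass the recursion entirely via the dual.
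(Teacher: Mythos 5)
This statement is Conjecture~\ref{conj:planar}, the Planar Embedding Conjecture, and the paper does not prove it; nor does it claim to. The paper explicitly records that the conjecture ``is still wide open'' and that the best known bound remains $O(\sqrt{\log n})$ from Rao. The paper's actual contribution is orthogonal: a $O(\log^3 n)$ flow-cut gap for planar \emph{digraphs}, proved via Lipschitz quasipartitions and Thorup's directed path separators --- a weaker distortion but for a strictly harder (asymmetric) setting.

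Your submission is likewise not a proof but a correctly-framed survey of why a proof is hard. The reductions you invoke --- flow-cut gap $\leftrightarrow$ $c_1(G)$ via Linial--London--Rabinovich and Aumann--Rabani, the cut-measure reformulation of $\ell_1$-embeddability, KPR padded decompositions plus Rao's assembly giving $O(\sqrt{\log n})$ --- are all accurate, and you correctly identify that the bottleneck is the $\sqrt{\#\text{scales}}$ accumulation in the multi-scale Fr\'echet-style assembly, not any single-scale estimate. Your proposed route through fractional packings of dual cycles is a reasonable heuristic, but you do not supply the missing ingredient (a single global cut distribution certifying all scales simultaneously with $O(1)$ congestion), and you candidly say so. As a result, there is no argument here to check for gaps: the entire ``proof'' is a statement of the gap. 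If the goal was to reproduce the paper's treatment of this conjecture, the correct answer is simply to state it as an open problem and cite Rao's $O(\sqrt{\log n})$ bound and the known special cases (series-parallel, outerplanar, bounded pathwidth), which you did; if the goal was to actually prove it, nothing here does that, and no such proof exists in the literature or in this paper.
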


Conjecture 2 has been verified for the case of series-parallel graphs \cite{gupta2004cuts}, $O(1)$-outerplanar graphs \cite{chekuri2006embedding}, $O(1)$-pathwidth graphs \cite{lee2009geometry}, and for some special classes of planar metrics \cite{sidiropoulos2013non}. However, 
Conjecture 2 is still wide open, and the current best gap is $O(\sqrt{\log n})$ by Rao \cite{Rao99} from 1999. Since then, no improvement has been made. 
The best-known lower bound is $2$ \cite{lee2010coarse}.

For graphs excluding any fixed minor the flow-cut gap is known to be $O(1)$ for uniform demands \cite{klein1993excluded}.


\subsection{Multi-commodity flow-cut gaps for directed graphs}

For the case of directed graphs, the flow-cut gap is defined in terms of the Directed Non-Bipartite \spcut problem which is an asymmetric variant of the \spcut problem, and is defined as follows. Let $G$ be a directed graph and let $c : E(G) \to \mathbb{R}_{\geq 0}$ be a capacity function. Let $T = \{(s_1,t_1),(s_2,t_2),\ldots,(s_k,t_k)\}$ be a set of terminal pairs, where each terminal pair $(s_i,t_i)$ has a non-negative demand $\demand(i)$. A cut in $G$ is a subset of directed edges of $E(G)$. For a cut $S \subseteq E(G)$ in $G$, let $I_S$ be the set of all indices $i \in \{1,2,\ldots,k\}$ such that all paths from $s_i$ to $t_i$ have at least one edge in $S$. Let $D(S) = \sum_{i\in I_S} \demand(i)$ be the demand separated by $S$. Let $W(S) = \frac{C(S)}{D(S)}$ be the \emph{sparsity} of $S$. The goal is to find a cut with minimum sparsity. The LP relaxation of this problem corresponds to the dual of the LP formulation of the directed maximum concurrent flow problem, and the integrality gap of this LP relaxation is the directed multi-commodity flow-cut gap. Hajiaghayi and R\"{a}cke \cite{aprxdirectedsparsestcut} showed an upper bound of $O(\sqrt{n})$ for the flow-cut gap. This upper bound on the gap has been further improved by Gupta \cite{agupta03directed}, and the current best approximation ratio is given by Agarwal, Alon and Charikar to $\tilde{O}(n^{11/23})$ in \cite{agarwal2007improved}. 

Besides these there are only a few studies of the flow-cut gap for specific (directed) graph families. 
For directed graphs whose abstract graphs are of treewidth $t$, it has been shown that the gap is at most $t \log^{O(1)} n $ by M{\'e}moli, Sidiropoulos and Sridhar \cite{memoli2016quasimetric}.
Salmasi, Sidiropoulos and Sridhar \cite{salmasi2019constant} have shown that the gap for the \emph{uniform} case is $O(1)$ on directed graphs whose abstract graphs are series-parallel, and on digraphs whose abstract graphs are of bounded pathwidth, which implies $O(\log n)$ bounds for the gap on the non-uniform case.

On the lower bound side Saks \etal~\cite{saks2004lower} showed that for general directed graphs the flow-cut gap is at least $k-\eps$, for any constant $\eps>0$, and for any $k=O(\log n/\log\log n)$. The current best lower bound is given by Chuzhoy and Khanna who showed a $\tilde{\Omega}(n^{\frac{1}{7}})$ lower bound for the flow-cut gap in \cite{chuzhoy2009polynomial}.


\paragraph{Our main result on multi-commodity flow-cut gaps.}

In this paper we obtain the first poly-logarithmic upper bound for any family of digraphs whose abstract graph is of super-constant treewidth.
Our main result on multi-commodity flow-cut gaps is as follows.

\begin{theorem}\label{thm:gap}
The uniform multi-commodity flow-cut gap on planar digraphs is $O(\log^2 n)$.  
The non-uniform multi-commodity flow-cut gap on planar digraphs is $O(\log^3 n)$.  
\end{theorem}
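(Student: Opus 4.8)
The plan is to reduce Theorem~\ref{thm:gap} to the quasimetric-embedding statement announced in the abstract, namely the construction of $O(\log^2 n)$-Lipschitz quasipartitions for the shortest-path quasimetric of a planar digraph, and then to run the standard LP-duality/region-growing machinery in the directed (quasimetric) setting. Concretely, given a multi-commodity flow instance $(G,c,d)$ on an $n$-vertex planar digraph, I would first pass to the optimal fractional solution of the LP relaxation of Directed Non-Bipartite Sparsest-Cut; by LP duality this fractional optimum equals $\mflow(G,c,d)$ up to the integrality gap we are trying to bound, and it furnishes a feasible length function $\ell\colon E(G)\to\mathbb{R}_{\ge 0}$ whose induced shortest-path quasimetric $\rho=\dd_{(G,\ell)}$ has the property that the demand is ``stretched'' while the total capacity $\times$ length is $1$ (after normalization). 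The goal is then to exhibit an integral cut whose sparsity is within an $O(\log^2 n)$ (uniform) resp.\ $O(\log^3 n)$ (non-uniform) factor of this fractional value.

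The heart of the argument is the quasipartition tool. Recall that a quasipartition of a quasimetric space is (morally) a reflexive transitive relation that coarsens the space at a given scale, and its being $\beta$-Lipschitz means the probability that an edge $(u,v)$ of length $\ell(u,v)$ is ``cut'' (i.e.\ $u$ can reach a scale-representative that $v$ cannot) is at most $\beta\,\ell(u,v)/\Delta$ at scale $\Delta$. I would invoke the construction promised in the abstract to obtain, for each dyadic scale $\Delta=2^i$ with $i=0,1,\dots,O(\log n)$, a random $O(\log^2 n)$-Lipschitz quasipartition of $(V(G),\rho)$ at scale $\Delta$, and combine them exactly as in the classical Calinescu--Karloff--Rabani / Fakcharoenphol--Rao--Talwar decomposition: glue the scales together to build a random directed cut $S\subseteq E(G)$ such that (i) every edge $(u,v)\in E(G)$ is put into $S$ with probability $O(\log^2 n)\cdot \rho(u,v)/\ell(u,v)\cdot(\text{telescoping in }\log n)$, which after being careful contributes an $O(\log^2 n)$ factor to the expected capacity cut; and (ii) every terminal pair $(s_i,t_i)$ with $\rho(s_i,t_i)$ large is separated by $S$ with constant probability, so the expected separated demand is $\Omega$ of the fractional demand. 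Taking the ratio and using that a random variable does not exceed its expectation on some point of the sample space yields an integral cut of sparsity $O(\log^2 n)$ times the LP value in the uniform case. For the non-uniform case one pays an extra $O(\log n)$ because the demand pairs live at $O(\log n)$ different distance scales and one has to union over (or randomly select among) these scales in the standard way, giving the claimed $O(\log^3 n)$.

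I would organize the write-up as: (1) state precisely the quasipartition theorem and the notion of $\beta$-Lipschitz quasipartition; (2) a ``padded decomposition'' lemma assembling the dyadic quasipartitions into a single random cut with the two properties above — this is where the directedness forces care, since the separation event for $(s_i,t_i)$ must be phrased asymmetrically (``$s_i$ reaches the boundary, $t_i$ does not'', matching the definition of $I_S$ in the Directed Non-Bipartite Sparsest-Cut), and one must check that ``cutting'' an edge in the quasipartition sense really does correspond to deleting a directed edge that breaks all $s_i\to t_i$ paths; (3) the LP-duality reduction tying $\mflow$ to the fractional sparsest cut; (4) combine. The main obstacle I anticipate is step~(2): in undirected graphs the separation of a demand pair follows because the two endpoints land in different clusters of an honest partition, but a quasipartition is not a partition — it is a preorder — so I will need to argue that if $\rho(s_i,t_i)$ exceeds the scale then with constant probability the ``downward closure'' of $s_i$ excludes $t_i$ while being ``closed'' enough that the edges leaving it form a valid directed cut, and that these leaving-edges are charged correctly against $\ell$. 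A secondary technical point is ensuring the telescoping over dyadic scales only loses a single $\log n$ and not $\log^2 n$, i.e.\ that the per-scale Lipschitz constant $O(\log^2 n)$ is not multiplied by the number of scales; this is handled as usual by noting each edge is ``active'' at essentially one scale (the one comparable to $\rho(u,v)$) up to a geometric tail.
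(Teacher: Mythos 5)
Your high-level plan is the same one the paper takes: both reduce Theorem~\ref{thm:gap} to the Lipschitz-quasipartition result (Theorem~\ref{thm:main_lip}) and then invoke LP duality and a scale-combination argument to convert quasipartitions into directed cuts. Be aware, though, that the paper does not actually carry out this reduction itself — it states that Theorem~\ref{thm:gap} is a ``direct consequence'' of Theorem~\ref{thm:main_lip} and explicitly defers the reduction to \cite{memoli2016quasimetric,salmasi2019constant}, with the non-uniform case passing through the directed-$\ell_1$ embedding of Theorem~\ref{thm:main_embed} (which is where the extra $\log n$ enters, matching your intuition about the $O(\log n)$ distance scales).

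Two points in your sketch need correction or sharpening. First, the obstacle you flag in step (2) — that a quasipartition is a preorder rather than a partition, and that you worry the ``downward closure'' of $s_i$ only excludes $t_i$ with constant probability — is in fact handled deterministically by the framework. In the paper a quasipartition $R$ is always \emph{induced by a cutset} $F\subseteq E(G)$, meaning $(u,v)\in R$ iff $v$ is reachable from $u$ in $G\setminus F$, and the $\Delta$-bounded property guarantees that if $d_G(u,v)>\Delta$ then $(u,v)\notin R$ \emph{with probability $1$}. So for a demand pair with $\rho(s_i,t_i)>\Delta$ the cutset $F$ severs \emph{every} directed $s_i\to t_i$ path by construction, not merely with constant probability; the only randomness you need to control is the expected capacity of $F$, which the $\beta$-Lipschitz bound gives you directly. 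Your version would still yield the right asymptotics, but you would be paying a needless union bound and obscuring why the reduction is clean.

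Second, you should be explicit that the argument for the non-uniform case is really an embedding argument, not just a ``random cut with good expectation'' argument. The standard route (and the one the cited references follow) is: each $\Delta$-bounded $\beta$-Lipschitz quasipartition contributes a coordinate to a map into directed $\ell_1$, the sum over $O(\log n)$ dyadic scales gives an embedding with distortion $O(\beta\log n)=O(\log^3 n)$ (this is Theorem~\ref{thm:main_embed}), and then directed $\ell_1$ quasimetrics coincide with nonnegative combinations of directed cut pseudo-quasimetrics, so the LP optimum (an arbitrary shortest-path quasimetric of the length function $\ell$) is within the embedding distortion of a convex combination of cuts, one of which achieves the sparsity bound. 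For the uniform case one avoids the full embedding and works at a single effective scale, which is why only $O(\log^2 n)$ is lost. If you write the reduction up yourself rather than citing it, you should make this distinction between the two cases explicit rather than attributing it loosely to ``telescoping.''
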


Given the strong lower bound by Chuzhoy and Khanna \cite{chuzhoy2009polynomial} for general directed graphs (even for directed acyclic graphs), it is 
highly natural to deal with planar digraphs. 

This result can be considered as a first step to generalize Conjecture 2 for planar digraphs. 
Let us observe that if we could show $O(1)$ gap in Theorem \ref{thm:gap}, this would imply Conjecture 2. Moreover, 
this result extends the above mentioned long-standing result for undirectd planar graphs by Rao \cite{Rao99} to planar digraphs (with a slightly weaker bound).

Theorem \ref{thm:gap} is a direct consequence of our main technical result, which uses the theory of Lipschitz quasipartitions, and is stated formally in  Theorem \ref{thm:main_lip} (see \cite{memoli2016quasimetric,salmasi2019constant} for the reduction).

\subsection{Cut problems of directed graphs}
Better bounds on the flow-cut gap typically also imply better approximation ratios for solving cut problems. For the Directed Non-Bipartite \spcut problem the flow-cut gap upper bounds of \cite{aprxdirectedsparsestcut} and \cite{agarwal2007improved} are also accompanied by $O(\sqrt{n})$ and $O(\tilde{n}^{11/23})$ polynomial time approximation algorithms respectively. Similarly for digraphs whose abstract graph are of treewidth $t$, a $t \log^{O(1)} n $ polynomial time approximation algorithm is also provided in \cite{memoli2016quasimetric}.

Another closely related cut problem is the Directed \mcut problem which is defined as follows. Let $G$ be a directed graph and let $c : E(G) \to \mathbb{R}_{\geq 0}$ be a capacity function. Let $T = \{(s_1,t_1),(s_2,t_2),\ldots,(s_k,t_k)\}$ be a set of terminal pairs. A \emph{cut} in $G$ is a subset of $E(G)$. The \emph{capacity} of a cut $S$ is $c(S) = \sum_{e \in S}c(e)$. The goal is to find a cut separating all terminal pairs, minimizing the capacity of the cut. This problem is NP-hard. An $O(\sqrt{n \log{n}})$ approximation algorithm for Directed \mcut was presented by Cheriyan, Karloff and Rabani \cite{cheriyan2005approximating}. Subsequently an $\tilde{O}(n^{2/3}/OPT^{1/3})$-approximation was given due to Kortsarts, Kortsarz and Nutov \cite{aprxdirectedmulticut3}. Finally \cite{agarwal2007improved} also gives an improved $\tilde{O}(n^{11/23})$-approximation algorithm for this problem. Again for digraphs whose abstract graphs are of  treewidth $t$ a $t \log^{O(1)} n $ approximation algorithm was also shown in \cite{memoli2016quasimetric}.

On the hardness side \cite{chuzhoy2006hardness} demonstrated an $\Omega(\frac{\log{n}}{\log{\log{n}}})$-hardness for the Directed  Non-Bipartite \spcut\\ problem and the Directed \mcut problem under the assumption that \nptime $\not\subseteq$ \dtime $(n^{\log{n}^{O(1)}})$. This was further improved by them in a subsequent work \cite{chuzhoy2009polynomial} to obtain an $2^{\Omega(\log^{1 - \varepsilon}{n})}$-hardness result for both problems for any constant $\varepsilon > 0$ assuming that \nptime $\subseteq$ \zpptime.

Our main results for these problems are the following theorems.

\begin{theorem}\label{thm:spcutresults}
There exists a polynomial-time $O(\log^2 n)$-approximation algorithm for the Uniform Directed \spcut problem on planar digraphs. 
Moreover, there exists a polynomial-time $O(\log^3 n)$-approximation algorithm for the General Directed \spcut problem on planar digraphs. 
\end{theorem}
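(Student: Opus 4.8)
The plan is to derive Theorem~\ref{thm:spcutresults} from Theorem~\ref{thm:gap} by a standard LP-rounding argument; the one nonroutine point is that the bound of Theorem~\ref{thm:gap} is established \emph{constructively}, via the Lipschitz quasipartitions of Theorem~\ref{thm:main_lip}. First I would write down the standard LP relaxation of Directed Non-Bipartite Sparsest-Cut: a variable $\rho(u,v)\ge 0$ for every ordered pair $(u,v)\in V(G)^2$, the directed triangle inequalities $\rho(u,v)+\rho(v,w)\ge\rho(u,w)$, the constraint $\sum_{i=1}^{k}\demand(i)\,\rho(s_i,t_i)\ge 1$ (with $\demand(i)\equiv 1$ over all ordered pairs in the Uniform problem), and objective $\min\sum_{(u,v)\in E(G)}c(u,v)\,\rho(u,v)$. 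This LP has $O(n^2)$ variables and $O(n^3)$ constraints, hence is solvable in polynomial time. Assigning length $1/D(S)$ to the edges of any integral cut $S$ (and $0$ elsewhere) and taking shortest-path distances produces a feasible solution of value $W(S)$, so the LP optimum $\beta^*$ satisfies $\beta^*\le W(S^*)$ for an optimal integral sparsest cut $S^*$; and by definition the integrality gap of this relaxation is the directed multi-commodity flow-cut gap, which Theorem~\ref{thm:gap} bounds by $\gamma:=O(\log^2 n)$ in the uniform case and $\gamma:=O(\log^3 n)$ in general. Finally, by passing from an optimal solution $\rho$ to the shortest-path quasimetric of $G$ under the edge lengths obtained by restricting $\rho$ to $E(G)$, we may assume $\rho$ is itself the shortest-path quasimetric of the planar digraph $G$; by the directed triangle inequality this transformation neither increases the objective nor violates feasibility, hence preserves optimality.

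The next step is to round such an optimal $\rho$. Since $\rho$ is the shortest-path quasimetric of a planar digraph, Theorem~\ref{thm:main_lip} applies to it, and I would use not merely the statement of Theorem~\ref{thm:gap} but the fact that its proof --- together with the reduction of \cite{memoli2016quasimetric,salmasi2019constant} --- is algorithmic. Running it on $\rho$ yields, in polynomial time, a polynomial-size family of vertex subsets $A_1,\dots,A_m\subseteq V(G)$ with nonnegative weights $\lambda_1,\dots,\lambda_m$ such that for every ordered pair $(u,v)\in V(G)^2$,
\[
\rho(u,v)\ \le\ \sum_{j=1}^{m}\lambda_j\,\mathbf{1}\!\left[u\in A_j,\ v\notin A_j\right]\ \le\ \gamma\cdot\rho(u,v) .
\]
Equivalently, $\rho$ embeds into directed $\ell_1$ with distortion $\gamma$ using polynomially many explicit cut coordinates, the $j$-th coordinate being the directed cut $\delta^+(A_j)$ of edges leaving $A_j$.

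Now a one-line averaging argument extracts the cut. Multiplying the upper inequality by $c(u,v)$ and summing over $E(G)$ gives $\sum_{j=1}^{m}\lambda_j\,C(\delta^+(A_j))\le\gamma\sum_{(u,v)\in E(G)}c(u,v)\,\rho(u,v)=\gamma\beta^*$, while multiplying the lower inequality by $\demand(i)$ and summing over the terminal pairs gives $\sum_{j=1}^{m}\lambda_j\,D(\delta^+(A_j))\ge\sum_{i=1}^{k}\demand(i)\,\rho(s_i,t_i)\ge 1$. Hence the sparsest of the cuts $\delta^+(A_1),\dots,\delta^+(A_m)$ has sparsity at most $\bigl(\sum_{j}\lambda_j C(\delta^+(A_j))\bigr)/\bigl(\sum_{j}\lambda_j D(\delta^+(A_j))\bigr)\le\gamma\beta^*\le\gamma\cdot W(S^*)$. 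Since $m$ is polynomial, the algorithm scans all of these cuts and returns the sparsest, which is a polynomial-time $\gamma$-approximation: $O(\log^2 n)$ for Uniform Directed Sparsest-Cut and $O(\log^3 n)$ for the general problem.

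The main obstacle lies not in this reduction --- solving the polynomial-size LP, the averaging bound, and the final scan are all elementary --- but in guaranteeing that the embedding of $\rho$ into directed $\ell_1$ can be produced \emph{algorithmically} and with only polynomially many cut coordinates, rather than merely shown to exist. This is precisely what the Lipschitz quasipartition construction of Theorem~\ref{thm:main_lip} delivers: the quasipartitions are built from shortest-path and planar-separator-type computations on $G$ and can be sampled in polynomial time, and the passage from polylogarithmically many sampled quasipartitions to a distribution over cuts realizing distortion $\gamma$ is the directed analogue of the standard padded-decomposition-to-$\ell_1$ argument. Given Theorem~\ref{thm:main_lip}, the present theorem is therefore essentially a corollary.
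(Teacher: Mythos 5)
Your argument is on the right track and essentially reproduces the reduction that the paper delegates to \cite{memoli2016quasimetric,salmasi2019constant} rather than carrying out explicitly: solve the directed Sparsest-Cut LP, replace the optimal $\rho$ by the shortest-path quasimetric it induces (as you note, this keeps the objective unchanged on edges and only increases the demand pairs' distances, so optimality is preserved), embed into directed $\ell_1$ using the constructive quasipartition machinery, and then extract a single cut by averaging. For the \emph{general} (non-uniform) problem this yields an $O(\log^3 n)$-approximation, matching the claim, because the distortion of the directed $\ell_1$ embedding (Theorem~\ref{thm:main_embed}) is $O(\log^3 n)$.

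The gap is in the uniform case. You set $\gamma:=O(\log^2 n)$ ``in the uniform case'' and then use $\gamma$ as the distortion of the embedding in the two inequalities $\rho(u,v)\le\sum_j\lambda_j\mathbf{1}[u\in A_j,\,v\notin A_j]\le\gamma\rho(u,v)$. But the embedding distortion is a property of the quasimetric space $(V(G),d_G)$ alone and does not depend on the demand function; Theorem~\ref{thm:main_embed} only guarantees $O(\log^3 n)$, never $O(\log^2 n)$. Recycling the symbol $\gamma$ quietly conflates the integrality gap of Theorem~\ref{thm:gap} (which is indeed $O(\log^2 n)$ for uniform demands) with the embedding distortion, and the averaging step as written then only delivers $O(\log^3 n)$ for the uniform problem as well, i.e.\ one $\log n$ factor worse than claimed. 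The $O(\log^2 n)$ uniform bound is obtained differently: one bypasses the full multi-scale $\ell_1$ embedding, chooses a single well-calibrated scale $\Delta$ (so that a constant fraction of the total demand mass $\sum_{u,v}\rho(u,v)$ sits on pairs at $\rho$-distance exceeding $\Delta$), samples one $\Delta$-bounded $O(\log^2 n)$-Lipschitz quasipartition from Theorem~\ref{thm:main_lip}, and argues directly that the induced cut has expected capacity at most $O(\log^2 n)\cdot\beta^*/\Delta$ while deterministically separating all pairs at distance $>\Delta$. This single-scale argument avoids the extra $\log n$ coming from aggregating over $O(\log n)$ distance scales in the $\ell_1$ embedding, and it is precisely this distinction that produces the $O(\log^2 n)$ versus $O(\log^3 n)$ dichotomy in both Theorem~\ref{thm:gap} and the statement you are proving. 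Your proposal needs that separate argument for the first half of the theorem.
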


\begin{theorem}\label{thm:multicutresults}
There exists a polynomial time $O(\log^2 n)$-approximation algorithm for the Directed \mcut  \\ problem on planar digraphs. 
\end{theorem}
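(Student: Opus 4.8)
\textbf{Proof proposal for Theorem~\ref{thm:multicutresults}.}
The plan is to use the standard LP-rounding template for Multicut, but with the directed Lipschitz quasipartitions of Theorem~\ref{thm:main_lip} playing the role that Lipschitz/padded decompositions play in the undirected setting (this is the directed analogue of the reduction of \cite{memoli2016quasimetric,salmasi2019constant}). First I would write the natural LP relaxation: a variable $x_e \ge 0$ for each $e \in E(G)$, minimizing $\sum_{e\in E(G)} c(e)\,x_e$ subject to $\sum_{e\in P} x_e \ge 1$ for every directed path $P$ from some $s_i$ to the corresponding $t_i$. Although there are exponentially many constraints, the separation oracle is a shortest-path computation, so the LP is solvable in polynomial time, its optimum $\mathrm{LP}^*$ is at most the integral optimum $\mathrm{OPT}$, and an optimal solution $x$ induces the shortest-path quasimetric $\dd_x$ on $V(G)$, which by the constraints satisfies $\dd_x(s_i,t_i)\ge 1$ for every $i$.

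Second, I would apply Theorem~\ref{thm:main_lip} to the shortest-path quasimetric space $(V(G),\dd_x)$ at a constant scale $\Delta\in(0,1]$ (say $\Delta = 1$), obtaining a random quasipartition $Q$ together with its induced edge cut $S(Q)\subseteq E(G)$. Two things must be checked. (i) \emph{Feasibility.} Every quasipartition in the support of $Q$ yields a multicut. A quasipartition is underlain by a preorder on $V(G)$ that is consistent with $\dd_x$ at scale $\Delta$, and an edge $(u,v)\notin S(Q)$ keeps $u$ ``before'' $v$ in this preorder; hence, by transitivity, any directed $s_i$-to-$t_i$ path in $G\setminus S(Q)$ would force $s_i$ before $t_i$, contradicting that points at directed distance $\ge\Delta$ (in particular the pair $s_i,t_i$, since $\dd_x(s_i,t_i)\ge 1\ge\Delta$) are never related in this preorder. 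So $S(Q)$ always separates all $k$ terminal pairs. (ii) \emph{Cost.} Writing $\beta(n)=O(\log^2 n)$ for the Lipschitz constant of Theorem~\ref{thm:main_lip}, linearity of expectation over the edges gives
\[
\mathbb{E}\bigl[c(S(Q))\bigr]=\sum_{(u,v)\in E(G)} c(u,v)\,\Pr\bigl[(u,v)\in S(Q)\bigr]\le \sum_{(u,v)\in E(G)} c(u,v)\cdot\frac{\beta(n)\,\dd_x(u,v)}{\Delta}=O(\log^2 n)\sum_{e\in E(G)} c(e)\,x_e\le O(\log^2 n)\cdot\mathrm{OPT}.
\]

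Third, to obtain a deterministic polynomial-time algorithm I would either derandomize the construction of Theorem~\ref{thm:main_lip} directly (it is assembled from polynomially many elementary random choices, so the method of conditional expectations applies), or sample $Q$, use that feasibility holds with probability $1$, and invoke Markov's inequality to get a multicut of cost at most $2\beta(n)\cdot\mathrm{OPT}$ after an expected constant number of independent trials; in either case the approximation ratio is $O(\log^2 n)$. I would also add a remark explaining the gap with the $O(\log^3 n)$ bound of Theorem~\ref{thm:spcutresults} for General Directed Sparsest-Cut: there one needs a full low-distortion quasimetric embedding handling all $\binom{k}{2}$ demand pairs at all scales simultaneously (a Bourgain-type argument costing an extra $O(\log n)$), whereas for Multicut every terminal pair is ``active'' at the single threshold $1$, so one application of the single-scale quasipartition suffices.

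The step I expect to be the main obstacle is (i): pinning down the precise combinatorial meaning of the cut $S(Q)$ induced by a quasipartition and proving that a bounded-scale quasipartition \emph{always} blocks every directed path between far-apart terminals. In the undirected world this is immediate (``different clusters''); in the directed world it rests essentially on transitivity of the underlying preorder together with the scale being at most the demand threshold, and one must make sure that the quasipartition delivered by Theorem~\ref{thm:main_lip} genuinely carries this deterministic separation guarantee, rather than merely separating far points with good probability. The remaining ingredients — LP solvability and the expectation bound — are routine.
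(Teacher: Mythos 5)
Your proposal is correct and matches the approach the paper intends: the paper gives no explicit proof of Theorem~\ref{thm:multicutresults}, instead citing the reductions in \cite{memoli2016quasimetric,salmasi2019constant}, and the LP-rounding argument you spell out is exactly that reduction. Two small points are worth pinning down. First, $\Delta=1$ is slightly too loose: $\Delta$-boundedness only forbids relating pairs at distance \emph{strictly greater} than $\Delta$, so a demand pair at LP-distance exactly $1$ could survive; take any constant $\Delta<1$ (say $\Delta=1/2$), which changes nothing asymptotically. Second, your phrase ``induced edge cut'' should be made precise as the canonical cutset $S(Q):=\{(u,v)\in E(G):(u,v)\notin Q\}$. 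By transitivity of $Q$ this cutset still induces $Q$, it is computable from $Q$ in polynomial time, and for it $\Pr[(u,v)\in S(Q)]=\Pr[(u,v)\notin Q]$, so the Lipschitz guarantee of Theorem~\ref{thm:main_lip} applies verbatim to the edge-cut probabilities in your cost bound. (The raw cutset output by the algorithm of Section~\ref{sec:lip} could a priori be larger; using the canonical $S(Q)$ sidesteps this, and it is also exactly what resolves the feasibility worry you raise at the end, since transitivity is what forces every surviving $s_i\to t_i$ path to imply $(s_i,t_i)\in Q$.) With those adjustments the feasibility argument, the expectation bound, and the Markov-based boosting are all correct.
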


Both of the above results are direct consequences of our results on the theory of bi-Lipschitz quasimetric embeddings, which we describe in the next subsection.
We refer the reader to \cite{memoli2016quasimetric,salmasi2019constant} for a detailed description of the reduction from directed cut problems to bi-Lipschitz quasimetric embeddings.

\subsection{Quasimetric spaces and embeddings}

\paragraph{Random quasipartitions.}
A \emph{quasimetric space} is a pair $(X,d)$ where $X$ is a set of points and $d:X\times X\to \mathbb{R}_{+}\cup\{+\infty\}$, that satisfies the following two conditions:

\begin{description}
\item{(1)}
For all $x,y\in X$, $d(x,y)=0$ iff $x=y$.
\item{(2)}
For all $x,y,z\in X$, $d(x,y)\leq d(x,z)+d(z,y)$.
\end{description}

The notion of random quasipartitions was introduced in \cite{memoli2016quasimetric}.
A \emph{quasipartition} of a quasimetric space $(X,d)$ is a transitive reflexive relation on $X$.
We remark that this definition is a generalization of the notion of a partition of a set of points, which can be viewed as a symmetric transitive binary relation, where two points are related if and only if they belong to the same cluster.

Let $G$ be a digraph and let $M=(V(G),d_G)$ be its shortest-path quasimetric space. 
Let $F\subseteq E(G)$.
Let
\[
R = \{(u,v)\in V(G)\times V(G) : v \text{ is reachable from } u \text{ in } G\setminus F\}.
\]
It is immediate to check that $R$ is indeed a quasipartition.
We say that $R$ is \emph{induced by the cutset} $F$.

Let $M=(X,d)$ be a quasimetric space. For any fixed $r \geq 0$, we say that a quasipartition $Q$ of $M$ is \emph{$r$-bounded} if for every $x,y \in X$ with $(x,y) \in Q$, we have $d(x,y) \leq r$.
For any $\beta>0$,
we say that a distribution over $r$-bounded quasipartitions of $M$, ${\cal D}$, 
is \emph{$\beta$-Lipschitz} if for any $x,y\in X$, we have that
\[
\Pr_{P\sim {\cal D}}[(x,y)\notin P] \leq \beta \frac{d(x,y)}{r}.
\]

Given a distribution $\mathcal{D}$ over quasipartitions we sometimes use the term random quasipartition (with distribution $\mathcal{D}$) to refer to any quasipartition $P$ sampled from $\mathcal{D}$.
We consider the quasimetric space obtained from the shortest path distance of a directed graph. M\'emoli, Sidiropoulos and Sridhar in \cite{memoli2016quasimetric} find an $O(1)$-Lipschitz distribution over $r$-bounded quasipartitions of tree quasimetric spaces. They also prove the existence of a $O(t\log n)$-Lipschitz distribution over $r$-bounded quasipartitions for any quasimetric that is obtained from a directed graph of treewidth $t$.

Our main result on Lipschitz quasipartitions is the following theorem.
This is the main technical contribution of this paper.
The fact that the quasipartition is efficiently samplable is needed in the algorithmic applications.

\begin{theorem}\label{thm:main_lip}
Let $n\in \mathbb{N}$,
and let $G$ be an $n$-vertex planar digraph with non-negative edge lengths.
Then for any $\Delta > 0$,
there exists a $\Delta$-bounded, $ O(\log^2 n)$-Lipschitz random quasipartition of the quasimetric space $(V(G), d_G)$.
Moreover the quasipartition is samplable in polynomial time.
\end{theorem}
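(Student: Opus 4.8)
The quasipartition we output will always be the reachability relation $R$ induced by a random cutset $F \subseteq E(G)$ (in the sense defined above), so it suffices to construct $F$ so that: (i) [boundedness] if $y$ is reachable from $x$ in $G \setminus F$ then $d_G(x,y) \le \Delta$; and (ii) [Lipschitz] for every ordered pair $(x,y) \in V(G)^2$ we have $\Pr[\text{every } x\!\to\!y \text{ path of } G \text{ meets } F] \le O(\log^2 n)\, d_G(x,y)/\Delta$. The plan is to obtain $F$ from a recursive decomposition of $G$ along \emph{directed shortest paths} --- an asymmetric analogue of the Klein--Plotkin--Rao/Rao decomposition of planar metrics --- where the ``radial'' cutting step is replaced by a directed region-growing step, and where the depth of the recursion is controlled by balanced \emph{shortest-path separators} of the kind used in data structures for planar digraphs. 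First deleting all edges of length $>\Delta$ and putting them into $F$ is harmless for (ii), since any pair such an edge disconnects is already at distance $>\Delta$, so we may assume all edge lengths are at most $\Delta$.

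\paragraph{The separator primitive and the recursion.}
I would start by establishing the directed planar shortest-path separator: for an $m$-vertex planar digraph $H$ with a designated root, computing an out-shortest-path tree and an in-shortest-path tree and taking a fundamental cycle of a suitable non-tree edge yields a separator consisting of $O(1)$ directed paths, each a shortest path of $H$, whose removal leaves components of at most $\tfrac{2}{3}m$ vertices (essentially the separator underlying Thorup's planar distance oracle). Recursing --- on the remainder components, and on the ``thickened segments'' of the separator paths, the latter being split by recursing on the underlying path itself --- produces a laminar family of pieces of depth $O(\log n)$, each piece carrying $O(1)$ directed shortest paths as its separator.

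\paragraph{The random cutset and the Lipschitz bound.}
At each piece of the hierarchy and each of its $O(1)$ separator shortest paths $Q$, I would run a directed region-growing step at scale $\delta := \Theta(\Delta/\log n)$: draw an independent random radius $\rho_Q$ in an interval of length $\Theta(\delta)$, form the forward ball $\ball^+(Q,\rho_Q)=\{w : d_H(Q,w)\le\rho_Q\}$ and the backward ball $\ball^-(Q,\rho_Q)=\{w : d_H(w,Q)\le\rho_Q\}$, put into $F$ all edges on the outgoing boundary of $\ball^+(Q,\rho_Q)$ and the incoming boundary of $\ball^-(Q,\rho_Q)$, and additionally put into $F$ the edges crossing a random family of transverse cross-sections spaced $\Theta(\delta)$ apart along $Q$ (positions along $Q$ being read off the shortest-path tree rooted at $Q$); this last step is needed precisely because a shortest-path separator can be far longer than $\Delta$. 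For (ii) I would union-bound over the $O(\log n)$ levels: at a single level, standard region-growing around $O(1)$ shortest paths with a radius interval of length $\Theta(\delta)=\Theta(\Delta/\log n)$ separates a fixed ordered pair $(x,y)$ with probability $O(d_G(x,y)/\delta)=O(\log n \cdot d_G(x,y)/\Delta)$, and the transverse cuts contribute the same order; summing over $O(\log n)$ levels gives $O(\log^2 n)\,d_G(x,y)/\Delta$.

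\paragraph{The main obstacle: boundedness.}
The crux is (i), and this is where the asymmetry bites. A directed $x\!\to\!y$ path $\pi$ in $G\setminus F$ is trapped only inside the \emph{forward} balls it starts in and on the ``after'' side of the transverse cross-sections, so one must track $x$ down the hierarchy and argue that, from the first level at which $\pi$ is confined to a carved forward ball $\ball^+(Q,\rho_Q)$, it can never escape, and that the $O(\log n)$ radii $\rho_Q$ (each at most $\delta=\Theta(\Delta/\log n)$) together with the cross-section widths along the nested sequence of pieces containing $x$ telescope to at most $\Delta$, so that $d_G(x,y)\le\Delta$. Making this rigorous --- handling the fact that a ``thin'' thickened segment of a long shortest-path separator can still host a long directed cycle and so must be recursively chopped, ensuring the recursion on thickened segments both terminates at depth $O(\log n)$ and stays balanced, and relating distances measured inside pieces to $d_G$ --- is the main technical work, and it is this interaction that pins the per-level scale at $\Theta(\Delta/\log n)$ and hence produces the second logarithmic factor. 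Polynomial-time samplability is then immediate: shortest-path trees, Lipton--Tarjan/Thorup-style planar separators, and region-growing are all polynomial-time, there are $O(\log n)$ levels with $O(1)$ separator paths each, and only $O(n)$ random radii and offsets are drawn.
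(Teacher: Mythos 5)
Your high-level skeleton (directed shortest-path separators plus recursive region-growing, union-bounding over $O(\log n)$ levels) is in the right family, but the primitive you build it on is genuinely different from the paper's, and the difference is exactly where your proof stalls: you explicitly defer the boundedness argument as ``the main technical work,'' and in your scheme that argument does not go through.

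Concretely, you propose to carve a \emph{single} random tube around each separator path $Q$: a forward ball $\ball^+(Q,\rho_Q)=\{w: d_H(Q,w)\le\rho_Q\}$ and a backward ball $\ball^-(Q,\rho_Q)=\{w: d_H(w,Q)\le\rho_Q\}$, plus transverse cross-sections. Suppose a surviving $x\to y$ path meets $Q$ at some vertex $q$. To bound $d_H(x,y)\le d_H(x,q)+d_H(q,y)$, you need $d_H(x,Q)$ and $d_H(Q,y)$ to be small. But $x\in\ball^+(Q,\rho_Q)$ only says $d_H(Q,x)\le\rho_Q$ (the wrong direction), and $y\in\ball^-(Q,\rho_Q)$ only says $d_H(y,Q)\le\rho_Q$ (again the wrong direction). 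The asymmetry you identify as ``where the asymmetry bites'' is not resolved by carving one ball in each direction around the whole path; the radii you want to telescope do not bound the relevant distances. The paper circumvents this with Procedure {\sc Path-Quasipartition} (Section~\ref{sec:PathQ}): instead of a single tube, it carves a \emph{sequence} of exponentially distributed quasiballs anchored at portal vertices, marching along $Q$ in a fixed order, once in $H$ (Step~1, from the tail $v'$ backwards) and once in $\Hinv$ (Step~2, from the head $v$ forwards). Because the $\Hinv$-balls $B'_j$ consist of vertices that can \emph{reach} the portal $a'_j$, membership $x\in B'_j$ bounds $d_H(x,a'_j)$ --- the correct direction --- and the monotone nesting of the carving order forces a surviving path to visit the portals in order, which is what makes the telescoping in Lemma~\ref{lem:path_q}(2) work.

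A second, smaller gap: you invoke a shortest-path separator directly on the original planar digraph and patch up the problem of over-long separator paths with transverse cross-sections. The paper instead first applies a randomized layering (the ``wave'' of Section~\ref{sec:waves}, a randomized version of Thorup's decomposition) to reduce to $(1,\Delta)$-layered pieces, precisely so that Lemma~\ref{lem:thorup_3_paths} yields three separating shortest paths each of length at most $\Delta$, and then Lemma~\ref{lem:layer_lip} recurses only on connected components of the complement of the separator (not on ``thickened segments''). Your alternative --- cross-sections along a long separator path, and recursion into tubular neighborhoods --- is plausible in the undirected KPR world, but the paper explicitly notes that the KPR minor-producing argument does not transfer to digraphs, and your sketch leaves both the cross-section primitive and the resulting boundedness accumulation as open steps. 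These are not cosmetic omissions: they are the content of Lemmas~\ref{lem:path_q}, \ref{lem:wave}, and~\ref{lem:layer_lip}, which together form the proof.
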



\paragraph{Quasimetric embeddings.}
Before stating our embedding results, we first need to introduce some notations and definitions. Let $M=(X,d)$ and $M'=(X',d')$ be quasimetric spaces.
A mapping $f:X\to X'$ is called an \emph{embedding of distortion $c\geq 1$} if there exists some $\alpha>0$, such that for all $x,y\in X$, we have 
$d(x,y)\leq \alpha \cdot d'(f(x),f(y)) \leq c\cdot d(x,y)$. 
We say that $f$ is \emph{isometric} when $c=1$.
We remark that this definition directly generalizes the standard notion of distortion of maps between metric spaces; however, here the ordering of any pair of points is important when dealing with distances.


\paragraph{Directed $\ell_1$ (Charikar \etal~ \cite{charikar2006directed}).}
 The directed $\ell_1$ distance between two points $x$ and $y$ is given by $d_{\ell_1}(x,y) = \sum\limits_i |x_i -y_i| + \sum\limits_i |x_i| - \sum\limits_i |y_i| $.

We can now state the main result on  embeddability into directed $\ell_1$,
which is an immediate consequence of Theorem \ref{thm:main_lip}.  
For a detailed description of the reduction from the directed $\ell_1$ embeddability to Lipschitz quasipartitions, we refer the reader to \cite{memoli2016quasimetric,salmasi2019constant}.

\begin{theorem}\label{thm:main_embed}
Let $n\in \mathbb{N}$,
and let $G$ be an $n$-vertex planar digraph with non-negative edge lengths.
Then the quasimetric space $(V(G), d_G)$ admits an embedding into directed $\ell_1$ with distortion  $O(\log^3 n)$.
Moreover the embedding is computable in polynomial time.
\end{theorem}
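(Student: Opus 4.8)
The plan is to obtain the embedding directly from Theorem~\ref{thm:main_lip} via the reduction of \cite{memoli2016quasimetric,salmasi2019constant} from Lipschitz quasipartitions to embeddings into directed $\ell_1$, whose conclusion is: if a quasimetric space on $n$ points admits, for every scale $\Delta>0$, a polynomial-time samplable $\Delta$-bounded $\beta$-Lipschitz random quasipartition, then it embeds into directed $\ell_1$ with distortion $O(\beta\log n)$ via a polynomial-time computable map. Theorem~\ref{thm:main_lip} supplies such quasipartitions for $(V(G),d_G)$ with $\beta=O(\log^2 n)$, so the resulting distortion is $O(\log^2 n\cdot\log n)=O(\log^3 n)$, and the map is polynomial-time computable because the quasipartitions are. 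It remains to recall why the reduction holds.

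The starting point is that directed cuts embed isometrically into directed $\ell_1$. Indeed, for $S\subseteq V(G)$ the one-dimensional map $x\mapsto\mathbf{1}[x\in S]$ has, between the images of $u$ and $v$, directed $\ell_1$ value $2\cdot\mathbf{1}[u\in S,\ v\notin S]$: the contribution of the $i$-th coordinate to $d_{\ell_1}(x,y)=\sum_i|x_i-y_i|+\sum_i x_i-\sum_i y_i$ is $|x_i-y_i|+(x_i-y_i)=2(x_i-y_i)^{+}$, which for $0/1$ values equals $2\cdot\mathbf{1}[x_i=1,\,y_i=0]$. Now let $R$ be a quasipartition of $(V(G),d_G)$ induced by a cutset $F$. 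For each $w\in V(G)$ put $B_w=\{x\in V(G):(x,w)\notin R\}$, the set of vertices from which $w$ is unreachable in $G\setminus F$. Then the coordinate $x\mapsto\Delta\cdot\mathbf{1}[x\in B_w]$ contributes exactly $2\Delta\cdot\mathbf{1}[(u,w)\notin R\ \wedge\ (v,w)\in R]$ to $d_{\ell_1}(f(u),f(v))$; specializing $w=v$ (and using $(v,v)\in R$) makes this equal to $2\Delta\cdot\mathbf{1}[(u,v)\notin R]$. Thus a single quasipartition, via its directed cuts $\{B_w\}$, already ``certifies'' unreachability, and does so with the orientation dictated by the asymmetric formula for $d_{\ell_1}$.

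The embedding then follows the Bourgain/Rao template, scale by scale. For each relevant dyadic scale $\Delta$, invoke Theorem~\ref{thm:main_lip} to draw $m=\Theta(\log n)$ independent $\Delta$-bounded $O(\log^2 n)$-Lipschitz random quasipartitions, extract from each a sub-collection of directed-cut coordinates scaled by $\Theta(\Delta)$ (a Bourgain-type random sub-sample of the cuts $B_w$ is taken so that the total $\ell_1$ length contributed at large scales does not blow up), and concatenate all coordinates over all scales and repetitions into one map $f$. For the \emph{lower} bound, given $u\neq v$ consider the largest dyadic scale $\Delta$ with $\Delta<d_G(u,v)$, so that $d_G(u,v)/2\le\Delta<d_G(u,v)$: every quasipartition in the support at this scale is $\Delta$-bounded, hence $(u,v)\notin R$, and the construction of the previous paragraph produces at this scale a coordinate of value $\Omega(\Delta)=\Omega(d_G(u,v))$ oriented correctly, whence (all coordinates being non-negative) $d_{\ell_1}(f(u),f(v))=\Omega(d_G(u,v))$. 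For the \emph{upper} bound, at scale $\Delta$ the Lipschitz guarantee gives $\Pr[(u,w)\notin R]\le O(\log^2 n)\cdot d_G(u,w)/\Delta$ for every $w$; together with the truncation at $\Delta$ this bounds the expected $d_{\ell_1}$-contribution of scale $\Delta$, and the standard cross-scale accounting (only $O(\log n)$ scales contribute for a fixed pair) sums these to $O(\log^2 n\cdot\log n)\cdot d_G(u,v)=O(\log^3 n)\cdot d_G(u,v)$ in expectation; a routine averaging/boosting step over the $m$ repetitions makes the upper bound hold simultaneously for all ordered pairs. Comparing the two bounds gives distortion $O(\log^3 n)$.

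I do not anticipate a genuine obstacle in this theorem: essentially all of it is bookkeeping on top of Theorem~\ref{thm:main_lip}. The two points that require (routine) care are already visible above: (i) respecting the asymmetry of $d_{\ell_1}$, i.e., orienting each directed cut $B_w$ so that the certifying coordinate points from $u$ to $v$ and not the other way; and (ii) choosing the per-scale coordinate set sparsely enough (the Bourgain-type random sub-sample of the cuts $B_w$, rather than all of them) that the large-scale contributions stay proportional to $d_G(u,v)$ while the dimension remains polynomial. Both are carried out in detail in \cite{memoli2016quasimetric,salmasi2019constant}, to which we defer. The substantive difficulty of the paper lies entirely in Theorem~\ref{thm:main_lip}.
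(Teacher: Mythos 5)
Your proposal is correct and takes essentially the same route as the paper: Theorem~\ref{thm:main_embed} is obtained directly from Theorem~\ref{thm:main_lip} via the quasipartitions-to-directed-$\ell_1$ reduction of \cite{memoli2016quasimetric,salmasi2019constant}, giving distortion $O(\log^2 n\cdot\log n)=O(\log^3 n)$. The paper simply states this as an immediate consequence and defers the reduction to the cited works; your added sketch of the mechanism (directed-cut coordinates from the sets $B_w$, scale-by-scale Bourgain-type sampling) is consistent with that reduction and correctly locates all the substantive work in Theorem~\ref{thm:main_lip}.
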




\subsection{High-level Overview of our Proof}

As we have explained above, all of the results in this paper are direct consequences of a single result concerning random quasipartitions.
Specifically, we show that for any planar digraph $G$, 
and for any ``scale'' $\Delta>0$, 
there exists a $\Delta$-bounded $O(\log^2 n)$-Lipschitz quasipartition of the quasimetric space $(V(G), d_G)$.

We remark that it is known that the classical theory of Lipschitz partitions of metric spaces is known to not be sufficient for obtaining such a result (see \cite{memoli2016quasimetric,salmasi2019constant}).
Thus, in order to handle the quasimetric case we  combine ideas from the theory of Lipschitz partitions with methods developed within the literature of distance oracles for directed planar graphs, developed by Thorup \cite{thorup2004compact}. 

Because we deal with directed graphs, some difficulties in using the distance oracle results appear in our proof. See Figure \ref{fig:path_step1} for example. Moreover, 
we also have to consider some random partitioning scheme of directed graphs, which is similar to the one used by Klein, Plotkin and Rao \cite{klein1993excluded} for undirected graphs.

We now explain what are the main ingredients used in the proof, and how they are combined in our main algorithm.
The input to the algorithm consists of a directed planar graph $G$ and $\Delta>0$.
The output is a random quasiparition of the quasimetric space $(V(G), d_G)$.
The main steps of the algorithm are as follows.
In order to simplify the exposition, we relax slightly the notation.

\begin{description}
    \item{\textbf{Step 1: Partioning $G$ into layers.}}
    It has been shown by Thorup \cite{thorup2004compact} that any planar digraph can be decomposed into a sequence of vertex-disjoint ``layers'' $L_1,L_2,\ldots$, such that each layer $L_i$ has a directed rooted spanning tree $T_i$, such that at least one of the two following properties hold:
   \begin{description}
   \item{(i)}
   Every vertex in $L_i$ can be reached from the root via a directed path of length at most $\Delta$ in $T_i$.
   \item{(ii)}
   Every vertex in $L_i$ can reach the root via a directed path of length at most $\Delta$ in $T_i$.
\end{description}
We construct a randomized version of this decomposition scheme, which is similar to the random partitioning scheme used by Klein, Plotkin and Rao \cite{klein1993excluded}.
One important property is that during this random decomposition step, each edge $(u,v)\in E(G)$ is cut with probability at most $O(d_G(u,v)/\Delta)$.
A key property is that, as in Thorup \cite{thorup2004compact},  every path of length at most $\Delta$ can be contained in at most three consecutive layers. 
This fact implies that in order to obtain a quasipartition for $G$, it is enough to compute a quasipartition for each layer, and then output their ``common refinement'' (which is formally defined by taking the union of the corresponding cutsets).

    \item{\textbf{Step 2: Quasipartioning each layer.}}
    It now remains to show how to compute a quasipartiton for each layer.
    Fix some layer $L$.
    We now invoke another result of Thorup \cite{thorup2004compact}, who showed that there exists three shortest paths $P_1$, $P_2$, $P_3$ of length at most $\Delta$ in $L$, such that $V(P_1\cup P_2\cup P_3)$ is a balanced vertex separator of $L$.
    Of course, we cannot simply delete the paths $P_1,P_2,P_3$, since this would cut some edges with probability $1$ (since the choice of the paths is deterministic).
    Instead, we show how to delete some random set of edges $C\subseteq E(L)$, such that both of the following two conditions hold for the resulting digraph:
    \begin{description}
    \item{(i)}
    Each edge $(u,v)\in E(L)$ is cut with probability at most $O(d_L(u,v)/\Delta)$.
    
    \item{(ii)}
    If there exists a path $R$ from $u$ to $v$ that intersects  $P_1\cup P_2\cup P_3$, 
    then it must be that $d_G(u,v) \leq \Delta$.
    \end{description}
    We refer to the above process as ``quasipartitioning the neighborhood of the separator''; we give more details about this process in Step 3.
    
    Equipped with this process, we easily obtain a quasipartition for $L$ via recursion: we recursively run the algorithm, effectively quasipartitioning the neighborhoods of all separators of each connected component of $L\setminus V(P_1\cup P_2\cup P_3)$.
    The final output is the union of all cutsets computed in all recursive calls.
    Since every edge can be contained in at most $O(\log n)$ recursive calls, it follows by the union bound that the probability of cutting any particular edge is at most $O(\log n)$ times the probability of cutting it in each quasipartitioning step, which we show below is at most $O(\log n \cdot d_L(u,v)/\Delta)$, which implies the result.
    
\item{\textbf{Step 3: Quasipartioning the neighborhood of a separator.}}   
Because the separators involved in Step 2 consist of the union of at most three shortest paths, it can be shown that it is enough to handle the case of a single separator path $P$.
An important difficulty here is that common intuition from the notion of a ``neighborhood'' in the case of undirected graphs, does not directly translate to the directed case that we are dealing with.
Roughly speaking, the reason is that there are two different neighborhoods of $P$: one consisting of all vertices that are at distance at most $\Delta$ \emph{from} some vertex in $P$,
and 
one consisting of all vertices that are at distance at most $\Delta$ \emph{to} some vertex in $P$ (See Figure \ref{fig:path_step1} for example).
We obtain a quasipartitioning scheme that, roughly speaking, handles the ``overlay'' of these two neighborhoods.
This process is inspired from the algorithm of Bartal \cite{bartal1996probabilistic} for computing Lipschitz partitions of undirected graphs.
Bartal's algorithm computes a random partition by growing balls of radii that are distributed according to some truncated exponential distribution.
We adapt Bartal'a algorithm by growing random ``quasiballs'', and showing that a  similar analysis can be used to analyze their behavior.
\end{description}
This completes the high-level description of our approach.

\paragraph{Comparison to Klein-Plotkin-Rao and beyond.}

It has been shown by Klein, Plotkin and Rao \cite{klein1993excluded}, that for any undirected graph $G$ that excludes some fixed minor, for any $\Delta>0$, the shortest path metric $(V(G), d_G)$ admits a $\Delta$-bounded, $O(1)$-Lipschitz partition.
Here, we obtain a slightly weaker bound (i.e.~$O(\log^2 n)$ instead of $O(1)$), for the more general case of directed graphs, but our resut holds only for planar digraphs.
The reason is that the approach in \cite{klein1993excluded} directly produces a forbidden minor, when the partitioning scheme fails to produce a bounded partition.
This is, in fact, one of the very few known results for general minor-free graph families, that does not use the celebrated theory of graph minors of Robertson and Seymour \cite{GM-series}.
In contrast, we use planarity in very specific ways in order to randomly ``decompose'' the input graph into more manageable pieces. 
More precisely, we only use the above mentioned path separator theorems by Thorup  \cite{thorup2004compact} for planar undirected graphs. Ittai and Gavoille \cite{Ittai06} showed that the above planar result can be extended to minor-free undirected graphs. 
We leave it as an open problem to extend our approach to arbitrary families of minor-free directed graphs. 

\subsection{Organization}
In Section \ref{sec:prelim} we introduce some definitions that will be used throughout the rest of the paper.
In Section \ref{sec:exp_balls} we describe a general process for deleting random subsets of edges, that is analogous to Bartal's Lipschitz partitioning scheme \cite{bartal1996probabilistic}.
In Section \ref{sec:PathQ} we show how to apply this general process to the special case of quasipartitioning the ``neighborhood'' of a shortest path. This corresponds to the above Step 3. 
In Section \ref{sec:waves} we introduce a randomized variant of Thorup's decomposition scheme \cite{thorup2004compact};
we also use (recursively) the path separator theorems from \cite{thorup2004compact} in combination with the quasipartitioning scheme from Section \ref{sec:PathQ}, to obtain a quasipartitioning scheme for each part of the decomposition. This corresponds to the above Step 1. 
Finally, we put all the ingredients together in Section \ref{sec:lip}, where we present the main algorithm for sampling a Lipschitz quasipartition in any planar digraph.

\section{Definitions and Preliminaries}
\label{sec:prelim}

We now introduce some notation that will be used throughout the paper. 
For any functions $f,g:\mathbb{N}\to \mathbb{N}$, we write $f(n) \lesssim g(n)$ whenever $f(n)=O(g(n))$.

\paragraph{Graphs and subgraphs.}
For any graph $G$ and any $U\subset V(G)$, we denote by $G[U]$ the subgraph of $G$ induced by $U$; that is $V(G[U])=U$, and $E(G[U])=E(G)\cap (U\times U)$.
For any $r>0$, we write
\[
N_G(U,r) = \{v\in V(G) : d_G(U,v)\leq r\},
\]
where
\[
d_G(U,v) := \min_{u\in U} d_G(u,v).
\]

For any digraph $G$, any $u,v\in V(G)$, and any directed path $P$ from $u$ to $v$ in $G$, we refer to $u$ and $v$ as the \emph{head} and \emph{tail} of $P$, respectively.
We denote by $\len(P)$ the length of $P$.
For any $a,b\in V(P)$, we write $a\leq_P b$ if $a$ precedes $b$ in the traversal of $P$.
We also write $b\geq_P a$ if $a\leq_P b$.

For any graph $G$, $X\subseteq V(G)$, and $\alpha\in [1,0]$, we say that $X$ is a \emph{$\alpha$-balanced vertex separator} of $G$ if every connected component of $G\setminus X$ has at most $\alpha |V(G)|$ vertices.

For any digraph $G$, let $\Ginv$ denote the digraph obtained by reversing the direction of every edge.
For any path $Q$ in $G$, let $\Qinv$ denote the path in $\Ginv$ obtained by reversing the direction of every edge in $Q$.

For any digraph $G$, and any path $P$ in $G$, and any $x,y\in V(P)$, such that either $x=y$, or $x$ appears before $y$ in $P$,
we denote by $P[x,y]$ the subpath of $P$ from $x$ to $y$.
Adapting standard set-theoretic notation, we also write
$P[x,y) = P[x,y]\setminus \{y\}$,
$P(x,y) = P[x,y]\setminus \{x\}$,
and
$P(x,y) = P[x,y]\setminus \{x,y\}$,

\section{Exponential Random Quasiballs}
\label{sec:exp_balls}

In this section we analyze the behavior of a general process for constructing quasipartitions.
The process involves computing a sequence of cutsets that are induced by quasiballs of exponentially distributed radius.

The process is completely analogous to the algorithm used by Bartal~\cite{bartal1996probabilistic} to compute probabilistic partitions of metric spaces.

Let $H$ be a digraph, with $|V(H)|=n$, and let $\Delta > 0$.
We define a probability distribution ${\cal D}$ over cutsets of $H$, i.e.~$2^{E(H)}$, as follows.
First, we define a sequence of pairwise disjoint subsets $B_1,\ldots,B_t\subseteq V(H)$.
Let $I=[0, \Delta \ln n)$, and let ${\cal I}$ be the truncated exponential probability distribution on $I$ with probability density function $p(x) = \frac{n}{n-1} \cdot \frac{1}{\Delta} \cdot e^{-x/\Delta}$.
Initially all vertices in $H$ are \emph{unmarked}.
Let $v_1\in V(H)$ be chosen arbitrarily.
Let $R_1 \sim {\cal I}$, and define
\[
B_1 = \ball_H(v_1, R_1).
\]
We mark all vertices in $B_1$.
For any $i\geq 1$, we proceed inductively.
If all vertices in $H$ are marked, then we terminate the sequence at $t=i-1$.
Otherwise, we arbitrarily either terminate the sequence at $t=i-1$, or we proceed to define $B_i$; that is, this choice can depend on all previous random choices.
We set $v_i$ be to an arbitrary unmarked vertex.
Let $R_i\sim {\cal I}$, and define 
\[
B_i = \ball_H(v_i, R_i) \setminus \left(B_1\cup \ldots \cup B_{i-1}\right).
\]
We mark all vertices in $B_i$.
This completes the inductive construction of the sequence $B_1,\ldots,B_t$.
Finally, we set
\[
F = \bigcup_{i=1}^t E(H) \cap \left( B_i \times (V(H) \setminus (B_1\cup \ldots \cup B_i)) \right).
\]
That is, for all $i\in \{1,\ldots,t\}$, 
the cutset $F$ contains all edges starting in $B_i$ and ending at outside of $B_1\cup \ldots \cup B_i$.
This completes the definition of the random cutset $F$.
We say that the resulting probability distribution ${\cal D}$ is a \emph{$\Delta$-shock}.


We remark that the above process is quite general and allows for many different strategies for choosing the next vertex $v_i$, and for deciding whether to terminate the sequence before all vertices have been marked.
In our main algorithm, we will apply this process by choosing the vertices $v_1,\ldots,v_t$ from an inverse traversal of some shortest path (see Section \ref{sec:PathQ}).

We next analyze $\Delta$-shocks.

\begin{lemma}\label{lem:exp_balls}
Let $H$ be a digraph,
let $\Delta > 0$, 
let ${\cal D}$ be a $\Delta$-shock,
and let
$F\sim {\cal D}$.
Then, for any $(u,v)\in E(H)$, we have
\[
\Pr[(u,v)\in F]  \leq 2 d_H(u,v)/\Delta.
\]
\end{lemma}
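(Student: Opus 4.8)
The plan is to mimic Bartal's analysis of probabilistic partitions, adapted to the directed ``quasiball'' setting. Fix an edge $(u,v)\in E(H)$. The key observation is that $(u,v)$ is cut precisely when there is an index $i$ such that the $i$-th ball $B_i$ ``captures'' $u$ but not $v$; that is, $u\in B_i$ and $v\notin B_1\cup\ldots\cup B_i$. So I would condition on which ball is the \emph{first} to touch the pair $\{u,v\}$: let $i^\star$ be the smallest $i$ such that $\ball_H(v_i,R_i)$ contains at least one of $u,v$ (note $\ball_H(v_i,R_i)$ is the ``raw'' ball, before subtracting earlier $B_j$'s). For $i<i^\star$ neither $u$ nor $v$ is in $B_i$, so they cannot have been separated yet. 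At step $i^\star$, the edge $(u,v)$ lands in $F$ only if the raw ball reaches $u$ but stops short of $v$: since $d_H(v_{i^\star},v)\le d_H(v_{i^\star},u)+d_H(u,v)$, the relevant event is contained in $R_{i^\star}\in[d_H(v_{i^\star},u),\,d_H(v_{i^\star},u)+d_H(u,v))$. If instead the ball reaches $v$ first or reaches both, then both $u,v$ get marked at step $i^\star$ (placed into $B_{i^\star}$ or an earlier $B_j$) and the edge is never cut.

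The main technical step is to bound $\Pr[R_{i^\star}\in[a,a+\delta)]$ where $a=d_H(v_{i^\star},u)$ and $\delta=d_H(u,v)$, \emph{conditioned on $i^\star$ and on all earlier random choices}. Here I would use the memorylessness of the (truncated) exponential: conditioned on the event that defines $i^\star$ (namely $R_{i^\star}\ge$ some threshold forced by earlier balls not having captured $u$ or $v$ — actually $R_{i^\star}$ just needs to be large enough to reach $\min\{d_H(v_{i^\star},u),d_H(v_{i^\star},v)\}$), the conditional density of $R_{i^\star}$ is still of the form $c\cdot e^{-x/\Delta}$ on the relevant tail interval, with the normalizing constant $c\le \frac{1}{\Delta}\cdot\frac{n}{n-1}\cdot\frac{1}{\Pr[\text{conditioning event}]}$. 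A standard computation then gives $\Pr[R_{i^\star}\in[a,a+\delta)\mid \cdots]\le (1-e^{-\delta/\Delta})\le \delta/\Delta$, possibly up to the factor $\frac{n}{n-1}\le 2$ coming from the truncation normalization; pushing the truncation bookkeeping through carefully is where the factor $2$ in the statement comes from. Summing (trivially, since only $i=i^\star$ contributes) and then taking expectation over $i^\star$ and the history yields $\Pr[(u,v)\in F]\le 2\,d_H(u,v)/\Delta$.

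I expect the main obstacle to be the conditioning argument: one must argue that the adversarial choices — which unmarked vertex becomes $v_i$, and when to terminate the sequence — do not interfere with the exponential tail bound on $R_{i^\star}$. The clean way is to note that at the moment step $i^\star$ begins, the values $v_{i^\star}$, $d_H(v_{i^\star},u)$, $d_H(v_{i^\star},v)$ are all determined by the history and are independent of the fresh draw $R_{i^\star}\sim{\cal I}$; and the event ``$i^\star$ equals this step'' is, given the history, exactly ``$R_{i^\star}\ge m$'' for the deterministic threshold $m=\min\{d_H(v_{i^\star},u),d_H(v_{i^\star},v)\}$ (with the convention that if $u$ or $v$ was already marked, this step contributes nothing). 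Conditioning a truncated exponential on exceeding $m$ and asking for the probability it lands in a length-$\delta$ window just above $d_H(v_{i^\star},u)\ge m$ is then the elementary estimate above. One should also handle the degenerate cases ($u$ or $v$ equal to $v_{i^\star}$, or the sequence terminating before either is reached — in which case no cut occurs) to make the case analysis exhaustive.
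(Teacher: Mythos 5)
The conditioning step is where the plan breaks. You propose to show that, conditioned on $i^\star=i$ and on the full history (i.e.\ conditioned on $R_i\geq m$ where $m=\min\{d_H(v_i,u),d_H(v_i,v)\}$), the probability that $R_i$ lands in the window $[d_H(v_i,u),\,d_H(v_i,v))$ is at most $\frac{n}{n-1}\cdot d_H(u,v)/\Delta$. That uniform bound is false, and it is false precisely because of the truncation you are trying to wave away. For the truncated exponential on $[0,\Delta\ln n)$, the conditioning event $\{R_i\geq m\}$ has probability $\frac{n}{n-1}\bigl(e^{-m/\Delta}-1/n\bigr)$, which can be arbitrarily small as $m\to\Delta\ln n$. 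In that regime the conditional probability of even a tiny window tends to $1$: concretely, taking $a=m=\Delta\ln n-\delta$ with $\delta=d_H(u,v)$ gives $\Pr[R\in[a,a+\delta)\mid R\geq a]=1$ while $\delta/\Delta$ is as small as you like. So the claimed estimate ``$\Pr[R_{i^\star}\in[a,a+\delta)\mid\cdots]\leq 1-e^{-\delta/\Delta}$, up to a factor $n/(n-1)$'' does not hold pointwise, and averaging a false pointwise bound over the history cannot yield the lemma. The issue is not bookkeeping; it is that the truncated exponential is genuinely not memoryless near its upper endpoint.

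The paper's proof avoids exactly this by not conditioning on $i^\star$. It performs Bartal's backwards induction: it tracks $\Pr[N_i]$, the probability of eventual separation given that neither $u$ nor $v$ is clustered before step $i$, via the recursion $\Pr[N_i]=\Pr[M_i^X]+\Pr[M_i^N]\Pr[N_{i+1}]$. There the (small) survival probability $\Pr[M_i^N]$ multiplies the future term, so the dangerous late steps — where your conditional bound blows up — are automatically damped by the small probability of ever reaching them, and the density prefactor $\frac{n}{n-1}$ accumulates into the final constant $2$. Your ``first ball to touch $\{u,v\}$'' decomposition is the right way to see that a cut can happen in at most one step, but converting it into a bound requires either this telescoping induction, or an unconditional accounting that keeps the factors $e^{-d_H(v_i,u)/\Delta}$ and sums them against the survival probabilities $\Pr[S_i]$; a direct appeal to memorylessness of the truncated draw does not suffice. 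You would also need to address the base case of the induction (the sequence terminating with $u$ or $v$ unmarked), which the recursion handles but your $i^\star$ framing leaves implicit.
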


The proof of Lemma \ref{lem:exp_balls} is similar to the analysis of an analogous result for the case of metric cases from Bartal \cite{bartal1996probabilistic}.
For the sake of completeness, the proof of Lemma \ref{lem:exp_balls} is given in Section \ref{app:exp_balls}.

\section{Quasipartitioning the Neighborhood of a Directed Path}
\label{sec:PathQ}

Let $H$ be a digraph,
let $\Delta>0$,
and let $Q$ be a shortest path in $H$ from some $v\in V(H)$ to some $v'\in V(H)$,
of length at most $\Delta$.
We describe a procedure for computing a random quasipartition, given $H$ and $Q$.
Let $I=[0, \Delta \ln n)$, and let ${\cal I}$ be the truncated exponential probability distribution on $I$ with probability density function $p(x) = \frac{n}{n-1} \cdot \frac{1}{\Delta} \cdot e^{-x/\Delta}$.
The procedure consists of the following steps.

\begin{description}
\item{\textbf{Procedure \PathQ}$(H, Q, \Delta)$}
\item{\textbf{\text{Step 1: Cutting balls away from $Q$.}}}
For any $i\in \mathbb{N}$, let $R_i \sim {\cal I}$.
Let $a_1 = v'$, 
\[
B_1 = \ball_H(a_1, R_1),
\]
For any $i>1$, let 
$a_i$ be the first vertex not in $B_1\cup \ldots \cup B_{i-1}$ that we visit when traversing $\Qinv$ starting from $v'$.
Let
\[
B_i = \ball_H(a_i, R_i) \setminus \left(\bigcup_{j=1}^{i-1} B_j\right).
\]
For any $i\in \mathbb{N}$, let 
\[
F_i = E(H) \cap (B_i\times (V(H)\setminus B_i)).
\]
See Figure \ref{fig:path_step1} for an example.

\item{\textbf{\text{Step 2: Cutting balls towards $Q$.}}}
For any $i\in \mathbb{N}$, let $R'_i \sim {\cal I}$.
Let $a'_1 = v$, 
\[
B'_1 = \ball_{\Hinv}(a'_1, R'_1),
\]
For any $i>1$, let 
$a'_i$ be the first vertex not in $B'_1 \cup \ldots \cup B'_{i-1}$ that we visit when traversing $Q$ starting from $v$;
w.l.o.g. we may assume that $d_{\Hinv}(a_{i}',a_{i-1}')=R_i'$, by  subdividing one edge of $H$.
Let
\[
B'_i = \ball_{\Hinv}(a'_i, R'_i) \setminus \left(\bigcup_{j=1}^{i-1} B'_j\right).
\]
For any $i\in \mathbb{N}$, let 
\[
F'_i = E(H) \cap ((V(H)\setminus B'_i) \times B'_i).
\]

\item{\textbf{Step 3: Output.}}
Let 
$F = \bigcup_{i\in \mathbb{N}} F_i$,
$F' = \bigcup_{i\in \mathbb{N}} F'_i$.
We output the cutset 
$C = F\cup F'$,
and the quasipartition $Z$ induced by $C$.
\end{description}
This completes the description of the procedure.

\begin{figure}
    \centering
    \scalebox{0.75}{\includegraphics{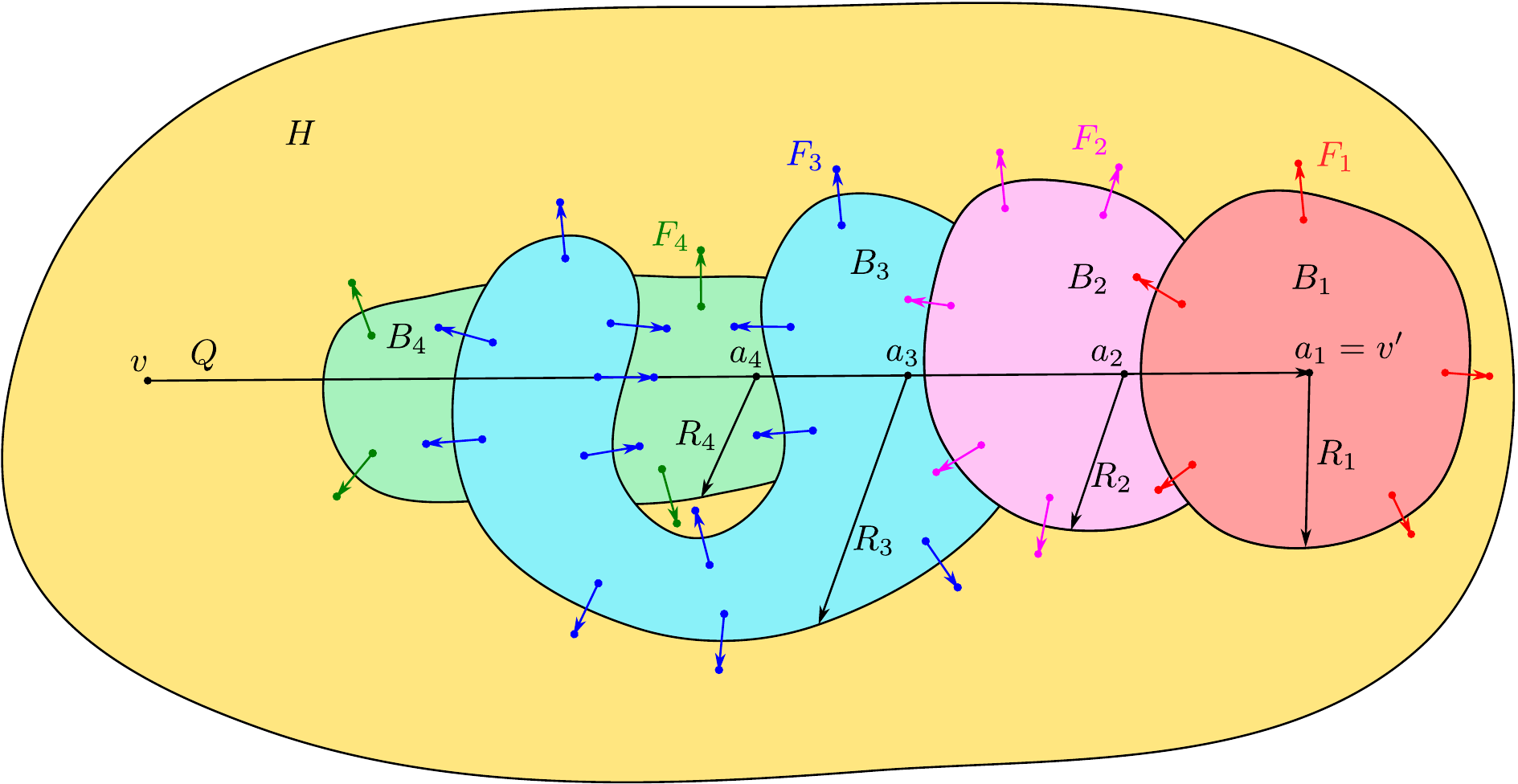}}
    \caption{An example of Step 1 in procedure \PathQ$(H, Q, \Delta)$. Each $B_i$ is depicted as a region with a distinct color. For clarity, only the regions $B_1,\ldots,B_4$ are depicted. 
    The sets $F_1, \ldots, F_4$ are also depicted, each with a different color.
    We remark that $Q$ is a shortest path from $v$ to $v'$, but does not need to be a shortest path from $v'$ to $v$. Consequently, the intersection of each $B_i$ with $Q$ does not need to consist of a single contiguous sub-path (e.g.~$B_3$).}
    \label{fig:path_step1}
\end{figure}

\paragraph{Portal assignment.}
The procedure \PathQ~computes sets of vertices $A=\{a_i\}_{i}$, $A'=\{a'_i\}_i$. 
We refer to the vertices in $A\cup A'$ as \emph{portals}.
We define mappings
\[
\pi_1 : V(Q) \to A,
\]
and
\[
\pi_2 : V(Q) \to A',
\]
as follows.
For any $a_i\in A$, and for any $x\in V(\Qinv[a_i,a_{i+1}))$,
we set
$\pi_1(x) = a_i$.
For any $a'_i\in A'$, and for any $x\in V(Q[a_i,a_{i+1}))$,
we set
$\pi_2(x) = a_i$.
Figure \ref{fig:portals} depicts an example.

\begin{figure}
    \centering
    \scalebox{0.72}{\includegraphics{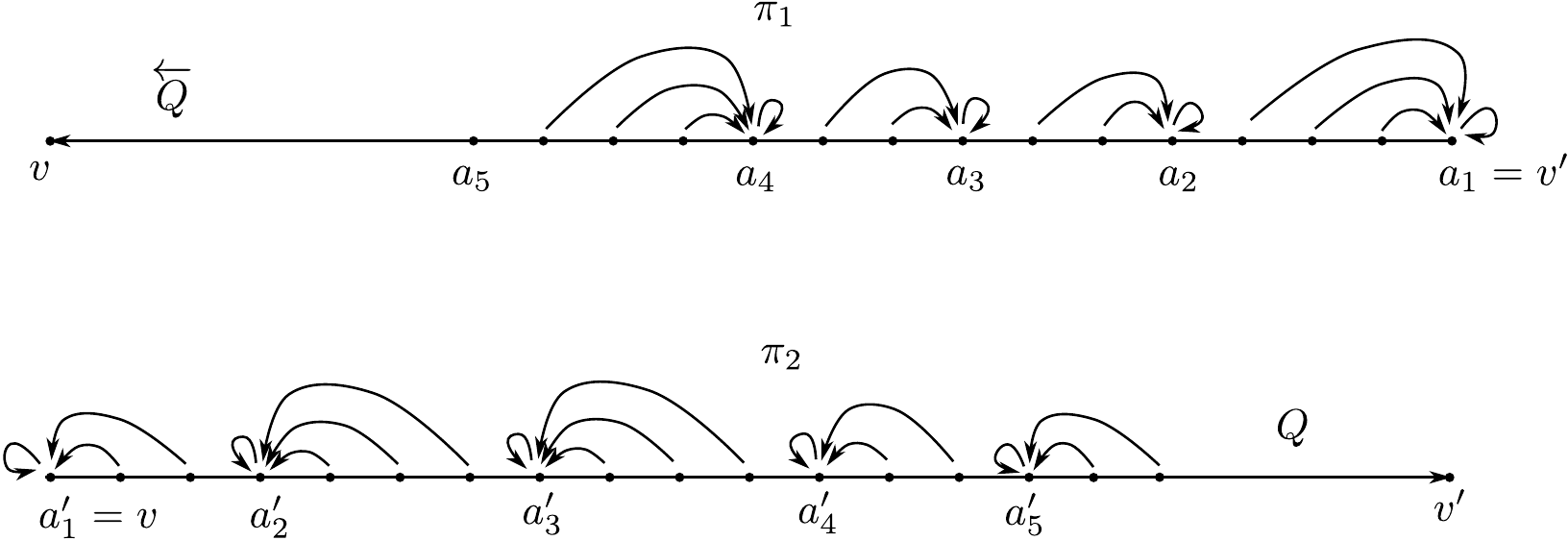}}
    \caption{An example portal assignment.}
    \label{fig:portals}
\end{figure}

\paragraph{Analysis.}
We now analyze the quasipartition computed by procedure \PathQ.

\begin{lemma}\label{lem:path_q}
Let $H$ be a digraph, let $Q$ be a shortest path in $H$ of length at most $\Delta>0$.
Let $C$ be a random cutset computed by Procedure \PathQ$(H,Q,\Delta)$, and let $Z$ be the induced quasipartition.
Then the following conditions hold:
\begin{description}
\item{(1)}
For any $(x,y)\in E(H)$, 
\[
\Pr[(x,y)\notin Z] \lesssim  \frac{d_H(x,y)}{\Delta}.
\]



\item{(2)}
Let $x,y\in V(H)$.
Suppose that there exists some path $P$ from $x$ to $y$ in $H\setminus C$, such that $P$ intersects $Q$.
Then, $d_H(x,y) \lesssim \Delta \log n$.
\end{description}
\end{lemma}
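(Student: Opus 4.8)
The plan is to prove the two parts separately. For part (1), the idea is to recognize that Step 1 and Step 2 of Procedure \PathQ~are each exactly an instance of the $\Delta$-shock process from Section \ref{sec:exp_balls}: in Step 1 we grow exponentially-distributed balls $B_i = \ball_H(a_i,R_i)$ in $H$ with centers chosen along $\Qinv$, marking vertices as we go, and the cutset $F = \bigcup_i F_i$ is precisely the set of edges leaving some $B_i$ to the outside of $B_1 \cup \dots \cup B_i$; this matches the definition of $F$ in the $\Delta$-shock verbatim once we observe that the extra intersection with $B_1 \cup \dots \cup B_{i-1}$ in later $F_j$'s is harmless (an edge from $B_i$ going into an earlier cluster is not cut by $F_i$, but such edges are shorter than $\Delta$ anyway — more cleanly, the event ``$(x,y) \in F_i$'' is contained in the $\Delta$-shock cutting event). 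Similarly Step 2 is a $\Delta$-shock in $\Hinv$, whose cutset $F'$ corresponds under edge-reversal to cutting edges into some $B'_i$ from outside. So by Lemma \ref{lem:exp_balls} applied once to $H$ and once to $\Hinv$, for any $(x,y)\in E(H)$ we get $\Pr[(x,y)\in F] \le 2\,d_H(x,y)/\Delta$ and $\Pr[(x,y)\in F'] \le 2\,d_{\Hinv}(y,x)/\Delta = 2\,d_H(x,y)/\Delta$. Since $(x,y)\notin Z$ implies $(x,y)\in C = F\cup F'$, a union bound gives $\Pr[(x,y)\notin Z] \le 4\,d_H(x,y)/\Delta \lesssim d_H(x,y)/\Delta$.

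For part (2), suppose $P$ is a path from $x$ to $y$ in $H\setminus C$ that meets $Q$ at some vertex $w\in V(Q)\cap V(P)$. The strategy is to bound $d_H(x,w)$ and $d_H(w,y)$ each by $O(\Delta\log n)$ and conclude by the triangle inequality. Consider the subpath $P[x,w]$: since $P$ avoids $C\supseteq F$, it is not cut by any $F_i$. I claim $x$ lies in the same $B_i$ as $w$. Indeed, $w$ is assigned to some cluster $B_i$ (every vertex is marked, or is on $Q$ — here I use that $Q$ is entirely clustered by the time Step 1 ends since the $a_i$'s exhaust $V(Q)$); following $P[x,w]$ backwards from $w$, each time the path would leave $B_i$ it would cross an edge of $F_i$, which is forbidden, and it cannot enter $B_i$ from an earlier cluster $B_j$ ($j<i$) without... — more carefully, the right statement is that the clusters $B_1,\dots,B_t$ partition $V(H)$ (Step 1 marks everything, with the arbitrary-termination option not used here since $Q$ forces all $a_i$ to be visited, or we simply continue until all marked), and $P[x,w]$ stays within $B_i$ because it never crosses a forward edge out of $B_i$ and the only way into $B_i$ along a path is... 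Here I would argue instead via the index: let $x\in B_p$, $w\in B_q$; walking $P$ from $x$ to $w$ the cluster index seen is nonincreasing at every forbidden-crossing... the cleanest route is: since $w\in B_q = \ball_H(a_q,R_q)\setminus(\cup_{j<q}B_j)$ and $P$ from $x$ does not use edges of $\cup_i F_i$, one shows $x\in \ball_H(a_q,R_q)$, hence $d_H(a_q,x)\le R_q < \Delta\ln n$ and $d_H(a_q,w)\le R_q < \Delta\ln n$, giving $d_H(x,w)\le d_H(x,a_q)+d_H(a_q,w)$ — but $d_H(x,a_q)$ is the wrong direction. So the correct bound is $d_H(x,w) \le d_H(x, \text{(something)}) $; the honest statement uses that $B_q$ has out-radius $R_q$ from $a_q$: every vertex of $B_q$ is reachable *from* $a_q$ within $R_q$, so we get $d_H(a_q,x)\le R_q$ and $d_H(a_q,w)\le R_q$, and then I bound $d_H(x,w)$ by going $x \to$ back to $Q$ — no: I bound $d_H(w,y)$ this way via Step 2 instead, and for $d_H(x,w)$ I use that $w\in V(Q)$ so $d_H(x,w) \le d_H(x,a_q) + \len(Q[\,\cdot\,])$...

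Let me restate the intended argument cleanly. By symmetry between the two steps (Step 2 is Step 1 run on $\Hinv$ with $Q$ reversed), it suffices to handle one direction. Using Step 2's clusters $B'_i$ in $\Hinv$: $w\in V(Q)$ gets assigned to some $B'_j$, so $d_{\Hinv}(a'_j, w) \le R'_j < \Delta \ln n$, i.e.\ $d_H(w, a'_j) < \Delta\ln n$. The path $P[w,y]$ in $H\setminus C$ uses no edge of $F'$; since $F'_j$ cuts all $H$-edges entering $B'_j$ from outside, and $w\in B'_j$, either $y\in B'_j$ (giving $d_{\Hinv}(a'_j,y)<\Delta\ln n$, i.e.\ $d_H(y,a'_j)<\Delta\ln n$, and then $d_H(w,y)\le d_H(w,a'_j)+\dots$ — still wrong direction). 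The resolution I expect the authors use: track that $P[w,y]$ can only move to clusters $B'_\ell$ with $\ell \ge j$ (it never enters $B'_j$ from outside going backward, i.e.\ forward it never leaves into a later-cut region in the forbidden way), and each $B'_\ell$ with $\ell\ge j$ was created after $w$ was clustered, so by construction all of $V(Q)$ up to $a'_{j+1}$ etc.\ is already clustered — ultimately yielding $d_H(w,y) \lesssim \Delta\ln n$ by summing a bounded number of radii, OR (the slick version) all the $B'_\ell$ that $P[w,y]$ visits are pairwise-disjoint subsets of $\ball_{\Hinv}(\{a'_j, a'_{j+1},\dots\}, \Delta\ln n)$... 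The \textbf{main obstacle}, and the step I would spend the most care on, is exactly this: showing that a $C$-avoiding path starting or ending on $Q$ cannot stray more than $O(\Delta\log n)$ in quasimetric distance — i.e.\ correctly formalizing the ``monotone cluster index along a non-cut path'' invariant for both the forward balls (Step 1) and reverse balls (Step 2) simultaneously, and combining the two one-sided bounds $d_H(x,w)\lesssim\Delta\log n$ (from Step 1 applied with $w$'s backward reachability) and $d_H(w,y)\lesssim\Delta\log n$ (from Step 2) into $d_H(x,y)\le d_H(x,w)+d_H(w,y)\lesssim\Delta\log n$ via the quasimetric triangle inequality.
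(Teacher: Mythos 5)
Your part (1) takes the same route as the paper (cite Lemma \ref{lem:exp_balls} once per Step, then union bound), but one claim along the way is backwards. In Procedure \PathQ, the cutset component is $F_i=E(H)\cap(B_i\times(V(H)\setminus B_i))$, which also cuts edges from $B_i$ back into earlier clusters $B_j$ with $j<i$; the $\Delta$-shock cutset only cuts edges from $B_i$ to $V(H)\setminus(B_1\cup\dots\cup B_i)$. So \PathQ's cutset is a \emph{superset} of the $\Delta$-shock cutset, not a subset as you assert when you write that the event ``$(x,y)\in F_i$ is contained in the $\Delta$-shock cutting event.'' The bound still goes through, but only because the proof of Lemma \ref{lem:exp_balls} actually bounds $\Pr[N_0]$, the probability that $u,v$ are never placed in the same cluster, and that event dominates membership in either cutset. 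That justification is what one should say; the paper is equally terse here, so this is a small wrinkle rather than a fatal flaw.

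Part (2) has a genuine gap: you never produce a proof, only an extended description of the obstacle. You correctly identify the key invariant to be proved (a cut-avoiding path cannot ``escape'' the cluster structure), but you repeatedly derail on quasimetric direction issues and stop at ``I would spend the most care on exactly this.'' The paper's argument is precisely the missing piece. It introduces portal maps $\pi_1:V(Q)\to A$ and $\pi_2:V(Q)\to A'$ (nearest center along $Q$ in the appropriate direction), takes $a$ and $b$ to be the first and last vertices of $V(P)\cap V(Q)$ along $P$, and bounds the sum
\[
d_H(x,\pi_2(a))+d_H(\pi_2(a),\pi_2(b))+d_H(\pi_2(b),\pi_1(b))+d_H(\pi_1(b),y).
\]
For the first and last terms it uses the invariants: no $F'$-avoiding path enters $B'_1\cup\dots\cup B'_i$ from outside (since $F'_i$ cuts all incoming edges of $B'_i$), and symmetrically no $F$-avoiding path leaves $B_1\cup\dots\cup B_i$. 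From these, $x\in B'_j$, $a\in B'_k$ with $j\le k$, and one gets $d_H(x,\pi_2(a))\le d_H(x,a'_j)+d_H(a'_j,\pi_2(a))\lesssim\Delta\log n + \Delta$; similarly for $d_H(\pi_1(b),y)$. The middle terms are both $\le\Delta$: $d_H(\pi_2(b),\pi_1(b))\le\len(Q)$ since $\pi_2(b)\le_Q b\le_Q\pi_1(b)$, and $d_H(\pi_2(a),\pi_2(b))\le\len(Q)$ because the invariant forces $\pi_2(a)\le_Q\pi_2(b)$ (otherwise $a$ could not reach $b$ in $H\setminus F'$). Your single-vertex version ($a=b=w$) can be made to work, but you would still need to prove the cluster invariants cleanly and route through the portals; the various triangle-inequality attempts in your write-up that go ``the wrong direction'' are precisely what the portal maps and the $\pi_2(a)\le_Q\pi_2(b)$ ordering are designed to avoid.
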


\begin{proof}
First we prove part (1).
From Lemma \ref{lem:exp_balls} it follows that for any $i\in \{1,2\}$, during Step $i$, the probability of cutting $(x,y)$ is at most $2 d_H(x,y)/\Delta$.
By taking a union bound over the two steps, the assertion follows.

It remains to prove part (2).
Let $a$ and $b$ be the first and last vertices in $V(P)\cap V(Q)$ that we encounter along a traversal of $P$.

We argue that $d_H(x,\pi_2(a)) \lesssim \Delta \log n.$
Recall that $C=F\cup F'$, where $F$ is the cutset computed in Step 1, and $F'$ is the cutset computed in Step 2.
Let $P_1=P[x,a]$.
Since $P\subseteq H\setminus C$, it follows that $E(P_1)\cap F'=\emptyset$.
By induction on $i\in \mathbb{N}$, we have that for all $i\geq 1$, there exists no path in $H\setminus F'$ from any vertex in $V(H)\setminus (B'_1\cup \ldots \cup B'_i)$ to any vertex in $B'_1\cup \ldots \cup B'_i$.
Therefore, there exist $j,k\in \mathbb{N}$, with $j\leq k$, such that $x\in B'_j$ and $a\in B'_k$ (see Figure \ref{fig:P1_not_cut}).
Thus
\begin{align}
d_H(x,\pi_2(a)) &\leq d_H(x,a_j) + d_H(a_j,\pi_2(a)) & \text{(triangle inequality)} \notag \\
 &\lesssim \Delta \log n + d_H(a_j, \pi_2(a)) & (x\in B'_j \subseteq \ball_{\Hinv}(a_j, R_j)) \notag \\
 &\lesssim \Delta \log n + \Delta & \text{($a_j \leq_Q a_k \leq_Q \pi_2(a)$ and $\len(Q)\leq \Delta$)} \notag \\
 &\lesssim \Delta \log n, \label{eq:connect_1}
\end{align}
as required.

We next argue that $d_H(\pi_1(b), y)  \lesssim \Delta \log n.$
Arguing as in the paragraph above, we have that there exist $s,t\in \mathbb{N}$, with $s\geq t$, such that $b\in B_s$ and $y\in B_t$.
Thus
\begin{align}
d_H(\pi_1(b), y) &\leq d_H(\pi_1(b), a_t) + d_H(a_t, y) & \text{(triangle inequality}) \notag \\
 &\lesssim d_H(\pi_1(b), a_t) + \Delta \log n & (y\in B_t \subseteq \ball_H(a_t,R_t)) \notag \\
 &\lesssim \Delta + \Delta \log n & \text{($\pi_1(b) \leq_Q a_s \leq_Q a_t$ and $\len(Q)\leq \Delta$)} \notag \\
 &\lesssim \Delta \log n \label{eq:connect_2},
\end{align}
as required.

We next argue that $d_H(\pi_2(a), \pi_2(b)) \lesssim \Delta$.
If $\pi_2(a) \leq_Q \pi_2(b)$, then $\pi_2(a)$ can reach $\pi_2(b)$ by following $Q$, and thus $d_H(\pi_2(a), \pi_2(b))\leq \len(Q) \leq \Delta$.
Thus it remains to consider the case $\pi_2(a) \geq_Q \pi_2(b)$.
Let $i,j\in \mathbb{N}$, such that $\pi_2(a)=a'_i$ and $\pi_2(b)=a'_j$.
Since $a'_i\geq_Q a'_j$, it follows that $i\geq j$.
If $i=j$, then $\pi_2(a)=\pi_2(b)$, and thus $d_H(\pi_2(a),\pi_2(b))=0$.
Thus, it remains to consider the case $i>j$.
By induction on the execution of Step 2, we have that for all $t\in \mathbb{N}$, any vertex in $V(H)\setminus (B'_1\cup \ldots \cup B'_t)$ cannot reach any vertex in $B'_1\cup \ldots \cup B'_t$ in $H\setminus F'$.
Therefore, $a$ cannot reach $b$, which is a contradiction.
We have thus established that 
\begin{align}
d_H(\pi_2(a), \pi_2(b))\leq \Delta. \label{eq:connect_3}
\end{align}

Next, we argue that $d_H(\pi_2(b),\pi_1(b)) \leq  \Delta$.
By the choice of portals, we have that $\pi_2(b)\leq_Q b \leq_Q \pi_1(b)$.
Since $\len(Q)\leq \Delta$, this implies that 
\begin{align}
    d_H(\pi_2(b),\pi_1(b)) &\leq \len(Q) \leq \Delta. \label{eq:connect_4}
\end{align}

Combining \eqref{eq:connect_1}, \eqref{eq:connect_2}, \eqref{eq:connect_3} and \eqref{eq:connect_4} with the triangle inequality, we obtain
\begin{align}
d_H(x,y) &\leq d_H(x,\pi_2(a)) + d_H(\pi_2(a), \pi_2(b)) + d_H(\pi_2(b), \pi_1(b)) + d_H(\pi_1(b), y) \notag \\
 &\lesssim \Delta \log n + \Delta + \Delta + \Delta \log n \notag \\
 &\lesssim \Delta \log n,
\end{align}
which concludes the proof.
\end{proof}

\begin{figure}
    \centering
    \includegraphics{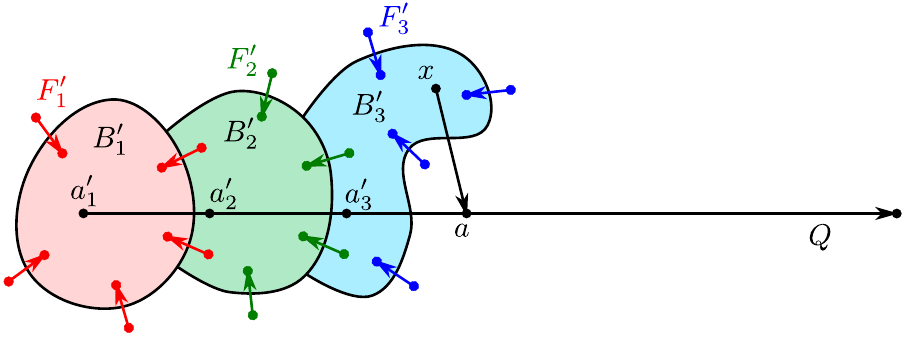}
    \caption{The configuration in the proof of Lemma \ref{lem:path_q}. The path $P_1$ is not cut by $F'$.}
    \label{fig:P1_not_cut}
\end{figure}

\section{Crashing Waves}
\label{sec:waves}

In this section, we refine some important concepts in \cite{thorup2004compact}. This is one of the keys in our analysis in the next section. 

We recall the following definition from \cite{thorup2004compact}.
A \emph{$(t,\Delta)$-layered} spanning tree $T$ in a digraph $H$ is a directed rooted spanning tree such that any root-to-leaf path in $T$ is the concatenation of at most $t$ shortest paths in $H$, each of length at most $\Delta$.
We say that $H$ is \emph{$(t,\Delta)$-layered} if it contains such a spanning tree.
We are interested in $(1,\Delta)$-layered graphs and their properties.
However, we use the slightly more general definition in order to maintain a compatible notation with   \cite{thorup2004compact}.

\begin{definition}[Wave]\label{defn:wave}
Let $G$ be a digraph, $v\in V(G)$, $\Delta>0$.
We define a probability distribution, ${\cal D}$, supported over $2^{E(G)}$.
Let $\tau_0,\ldots,\tau_n\in [0,\Delta)$ be chosen uniformly and independently at random.
Let 
\[
V_0 = \{u\in V(G) : d_G(v,u) \leq 2\Delta+\tau_0\},
\]
and for any $i\geq 1$, let
\[
V_i = \left\{\begin{array}{ll}
\{u\in V(G) : d_G(V_{i-1}, u) \leq 2\Delta+\tau_i\} & \text{ if $i$ is even}\\
\{u\in V(G) : d_G(u, V_{i-1}) \leq 2\Delta+\tau_i\} & \text{ if $i$ is odd}
\end{array}\right.
\]
For any $i\in \mathbb{N}$, let
\[
E_i = \left\{\begin{array}{ll}
E(G) \cap (V_{i-1} \times V_i) & \text{ if $i$ is odd}\\
E(G) \cap (V_i \times V_{i-1}) & \text{ if $i$ is even}
\end{array}\right.
\]
Finally, let
\[
E' = \bigcup_{i} E_i.
\]
We define ${\cal D}$ to be the distribution of the random variable $E'$.
We refer to ${\cal D}$ as a \emph{$(v,\Delta)$-wave}.
\end{definition}

Figure \ref{fig:wave} gives an example of the construction of a wave.

\begin{figure}
    \centering
    \scalebox{0.7}{\includegraphics{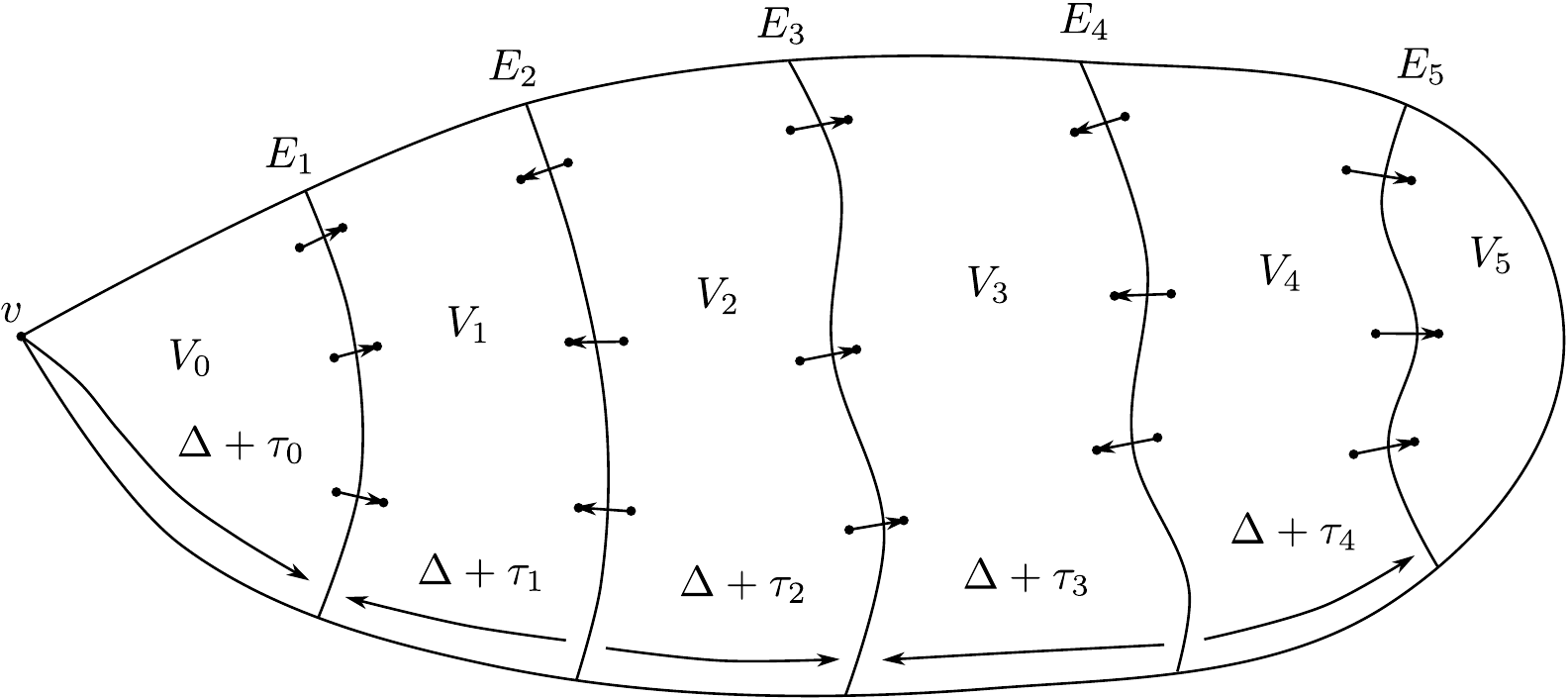}}
    \caption{A sample from a $(v,\Delta)$-wave.}
    \label{fig:wave}
\end{figure}

From the above definition, we show the following lemma. 

\begin{lemma}\label{lem:wave}
Let $G$ be a digraph, $v\in V(G)$, $\Delta>0$.
Suppose that all edges in $G$ have length at most $\Delta$.
Let ${\cal D}$ be a $(v,\Delta)$-wave.
Let $E'$ be a subset of edges sampled from ${\cal D}$, i.e.~$E'\sim{\cal D}$.
Then the following properties hold:
\begin{description}
\item{(i)}
For any $(a,b)\in E(G)$, we have that
\[
\Pr[(a,b)\in E'] \lesssim \frac{d_G(a,b)}{\Delta}.
\]
\item{(ii)}
The following holds with probability 1.
Let $P$ be any path in $G$ of length at most $\Delta$.
There exists some $i\geq 0$, such that $V(P)\subseteq V_i\cup V_{i+1}\cup V_{i+1}$.
Furthermore, $P$ can be decomposed into consecutive subpaths $P=P_1\circ P_2 \circ P_3$, such that for all $i\in \{1,2,3\}$, there exists $j\in \{i,i+1,i+2\}$, such that $P_i\subseteq V_j$


\item{(iii)}
The following holds with probability 1.
For any $i\in \mathbb{N}$, contracting $\bigcup_{j<i} V_j$ into a single vertex, and deleting $\bigcup_{j>i} V_{j}$ results in a $(1,\Delta)$-layered graph.
\end{description}
\end{lemma}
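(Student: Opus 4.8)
The plan is to verify the three properties of a $(v,\Delta)$-wave separately, each by a short direct argument.

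\textbf{Property (i): edge-cut probability.} Fix $(a,b) \in E(G)$. An edge $(a,b)$ lands in $E'$ only if it gets "sliced" at some level $i$, meaning one endpoint is in $V_{i-1}$ and the other is in $V_i \setminus V_{i-1}$ (with the orientation depending on the parity of $i$). First I would argue that $(a,b)$ can be sliced at only a bounded number of levels: since all edge lengths are at most $\Delta$ and each $V_i$ grows the previous set by at least $2\Delta$ (as $\tau_i \geq 0$), once both endpoints are inside $V_{i-1}$ they stay inside all subsequent $V_j$, and before that neither endpoint is in $V_{i-1}$ for $i$ too small; a short computation shows $(a,b)$ is "at risk" only for $O(1)$ consecutive values of $i$ — in fact at most two or three. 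For each such level $i$, conditioned on the set $V_{i-1}$ being fixed, the event that exactly one endpoint falls in $V_i$ is the event that the threshold $2\Delta + \tau_i$ lands in an interval of length at most $|d_G(V_{i-1},a) - d_G(V_{i-1},b)| \leq d_G(a,b)$ (or the analogous reverse-distance statement for odd $i$, using the triangle inequality for the quasimetric); since $\tau_i$ is uniform on an interval of length $\Delta$, this has probability at most $d_G(a,b)/\Delta$. A union bound over the $O(1)$ risky levels gives $\Pr[(a,b) \in E'] \lesssim d_G(a,b)/\Delta$.

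\textbf{Property (ii): short paths span three consecutive layers.} Let $P$ be a path of length at most $\Delta$ from some $x$ to some $y$. Let $i$ be the smallest index such that $x \in V_i$ (this is finite and well-defined since the $V_j$ are nested and eventually exhaust $V(G)$, or at least the reachable part; if $x$ is never covered the claim is vacuous for that component). I would show $V(P) \subseteq V_i \cup V_{i+1} \cup V_{i+2}$ by a reachability/distance argument: every vertex $u$ on $P$ satisfies either $d_G(x,u) \leq \Delta$ or $d_G(u,x)\leq \Delta$ depending on which side of... — more carefully, since $P$ goes from $x$ to $y$, for a vertex $u$ on $P$ we have $d_G(x,u) \leq \len(P) \leq \Delta$ and $d_G(u,y)\leq \Delta$. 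Now $x \in V_i$; if $i+1$ is even, $V_{i+1}$ absorbs everything within forward-distance $2\Delta$ of $V_i$, so $u \in V_{i+1}$; if $i+1$ is odd, $V_{i+1}$ absorbs everything within backward-distance $2\Delta$, and then $V_{i+2}$ (even) absorbs forward-distance $2\Delta$ of that — so in either parity case $u$ is caught by the time we reach $V_{i+2}$. For the decomposition $P = P_1 \circ P_2 \circ P_3$, I would cut $P$ at the first vertex entering $V_{i+1}\setminus V_i$ and the first entering $V_{i+2}\setminus V_{i+1}$, and check monotonicity: once a prefix-traversal vertex enters a layer it never needs an earlier layer — this uses that $V_j$ are nested. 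This step needs the most care about parities and which direction ($d_G(\cdot,\cdot)$ vs.\ $d_G(V_{i-1},\cdot)$) the absorption happens in; drawing the picture of Figure~\ref{fig:wave} and tracking even/odd indices is the bookkeeping burden.

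\textbf{Property (iii): the $i$-th slab is $(1,\Delta)$-layered.} Fix $i$. Let $G_i$ be the graph obtained by contracting $\bigcup_{j<i}V_j$ to a single root vertex $r$ and deleting $\bigcup_{j>i}V_j$; its vertex set is $r$ together with $V_i \setminus \bigcup_{j<i}V_j$. I need a directed rooted spanning tree of $G_i$ in which every root-to-leaf path is a single shortest path of length at most $\Delta$. The construction: for each vertex $u$ in the slab, by definition of $V_i$ there is a vertex $w \in V_{i-1}$ with $d_G(w,u) \leq 2\Delta + \tau_i$ — wait, that bound is $2\Delta+\tau_i > \Delta$, so I must be more careful. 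The right move, as in Thorup, is to note that the slab $V_i \setminus V_{i-1}$ is itself built as the set of vertices within distance (forward or backward, by parity) exactly in the annulus, and one grows a shortest-path tree from the contracted boundary $r$ truncated at depth $\Delta$; the key is that the $+2\Delta$ in the definition is there precisely so that two consecutive slabs overlap enough that every vertex of $V_i\setminus V_{i-1}$ is within distance $\Delta$ of $V_{i-1}$ — hmm, that requires re-examining. I would instead follow Thorup's layered-tree construction directly: each $V_i\setminus V_{i-1}$ decomposes into "bands" of width $\Delta$, and since we only need $(1,\Delta)$-layered after contracting everything below $V_i$ and discarding everything above, a single band remains and a truncated shortest-path forest from $r$ (oriented according to the parity of $i$) is the desired tree. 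The main obstacle overall is exactly this item (iii): reconciling the $2\Delta + \tau_i$ growth rate in Definition~\ref{defn:wave} with the $(1,\Delta)$-layered conclusion, i.e.\ being precise about why a single slab, once its interior-facing boundary is contracted, has a spanning tree made of length-$\leq\Delta$ shortest paths emanating from the root. I expect this to require invoking the structural details of Thorup's decomposition \cite{thorup2004compact} rather than a from-scratch argument, and to be where the parity of $i$ (forward vs.\ backward shortest paths) genuinely matters.
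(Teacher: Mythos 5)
Your treatment of part (i) is essentially the paper's: the paper introduces $U_i = V_0\cup\cdots\cup V_{i-1}$, shows that if $(a,b)$ can be cut at step $i$ then $d_G(U_i,a)\leq 3\Delta$, hence $d_G(U_{i+2},a)\leq\Delta$, hence $(a,b)$ is already absorbed before step $i+2$ — so the edge is ``at risk'' at at most two consecutive steps — and then union-bounds the per-step cut probability $d_G(a,b)/\Delta$. You sketch exactly this; the only difference is that you leave the ``$O(1)$ risky levels'' count informal where the paper makes it precise (two steps, via the $2\Delta/3\Delta$ computation). That's fine.

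Part (ii) has a real gap. You set $i$ to be the birth index of the \emph{first} vertex $x$ of $P$, whereas the paper sets $i := \min\{j : V(P)\cap V_j\neq\emptyset\}$ and picks a witness $z\in V(P)\cap V_i$; for that witness every other $w\in V(P)$ satisfies $d_G(z,w)\leq\Delta$ or $d_G(w,z)\leq\Delta$, whichever direction the traversal gives, and that is what pins the entire path into $V_i\cup V_{i+1}\cup V_{i+2}$ with \emph{no} vertex in a lower annulus. With your choice $i=\beta(x)$, a later vertex $u$ of $P$ can have $\beta(u)<i$ (one checks via the backward-ball absorption argument that $\beta(u)$ can drop as low as $i-2$), so the annuli actually visited are among $\{i-2,\ldots,i\}$ rather than $\{i,i+1,i+2\}$, and the three-piece decomposition you propose (cut at the first entry to $V_{i+1}\setminus V_i$ and to $V_{i+2}\setminus V_{i+1}$) names the wrong breakpoints. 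Separately, you justify the ``once you enter a layer you never need an earlier one'' monotonicity by appealing to nestedness of the $V_j$; nestedness is irrelevant to that claim (nestedness only gives the trivial containment $V(P)\subseteq V_{i+2}$). The paper's argument is that, by the orientation of the cuts, for each $t$ either all edges from $V_t$ to $V_{t+1}$ or all from $V_{t+1}$ to $V_t$ are deleted, which is what forces the intersection of $P$ with each $V_j$ to be a contiguous subpath; that is a statement about the direction in which the wave cuts, not about nestedness.

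For part (iii) you are candid that you don't have a complete argument, and you've identified the right construction — contract $\bigcup_{j<i}V_j$ to a root $u_i$ and take the shortest-path tree rooted at $u_i$ in the resulting graph — which is exactly all the paper says. Your worry about reconciling the annulus width $2\Delta+\tau_i$ with the $(1,\Delta)$ radius is a fair observation; the paper's one-line proof does not elaborate on this, so you should simply state the shortest-path-tree construction rather than try to derive a from-scratch bound that the paper itself does not supply.
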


\begin{proof}
We first prove (i).
Let $(a,b)\in E(G)$.
Consider the process of sampling $E'$ from ${\cal D}$.
Let $U_0=\{v\}$,
and for any  $i\geq 1$,
let $U_i = V_0\cup \ldots \cup V_{i-1}$.
Intuitively, we can think of the process of sampling $E'$ as inductively constructing the sequence $\{V_i\}_{i\geq 0}$, and $U_i$ is the set of vertices that have already been explored just before constructing $V_i$. 
In order to cut the edge $(a,b)$ during the construction of $V_i$, it must be that 
\begin{align}
d_G(U_i, a)\leq 2\Delta+\tau_i < d_G(U_i,b). \label{eq:wave_0}
\end{align}
This implies that 
\begin{align}
d_G(U_i, a) \leq 3\Delta. \label{eq:wave_1}
\end{align}
By the definition of a wave we have that $N_G(U_i,2\Delta)\subseteq U_{i+2}$.
Combining with \eqref{eq:wave_1} we get 
\begin{align}
d_G(U_{i+2}, a) \leq \Delta. \label{eq:wave_2}
\end{align}
However, \eqref{eq:wave_2} implies that, with probability $1$, $\{a,b\}\subseteq U_{i+1}$,
and thus
$(a,b)$ cannot be cut during the construction of $V_{i+2}$.
Therefore, $(a,b)$ can potentially be cut in at most two consecutive steps during the constructing of some $V_i$ and $V_{i+1}$.
Note that $i$ is a random variable that depends on the choice of the random variables $\{\tau_i\}_{i\geq 0}$.
In each one of these two steps, by triangle inequality, we have that
$(a,b)$ is cut when \eqref{eq:wave_0} holds, which happens with probability at most $\frac{d_G(U_i,b)-d_G(U_i,a)}{\Delta} \leq \frac{d_G(a,b)}{\Delta}$.
By the union bound over these two steps, it follows that the probability that $(a,b)$ is cut at any point during the construction of the wave is at most $2\frac{d_G(a,b)}{\Delta}$, as required.
This proves part (i).

Next, we prove part (ii).
Let $P$ be any path in $G$ of length at most $\Delta$.
Let $w$ and $w'$ be the first and last vertices in $P$.
Let $i$ be the minimum integer such that $V(P)\cap V_i \neq \emptyset$.
Let $v$ be an arbitrary vertex in $V(P)\cap V_i$.
Since $P$ has length at most $\Delta$, it follows that for any $v'\in V(P)$, either $d_G(v,v')\leq \Delta$, or $d_G(v',v) \leq \Delta$.
In both cases, this implies that $v'\in V_i\cup V_{i+1}\cup V_{i+2}$.
Since $v'$ is arbitrary, we get $P\subseteq V_i\cup V_{i+1}\cup V_{i+2}$.
It remains to show that $P$ can be decomposed into consecutive subpaths $P=P_1\circ P_2 \circ P_3$, such that for all $i\in \{1,2,3\}$, there exists $j\in \{i,i+1,i+2\}$, such that $P_i\subseteq V_j$.
By the construction of a wave, it follows that for all $t$, we either cut all edges from $V_t$ to $V_{t+1}$, or we cut all the edges from $V_{t+1}$ to $V_t$.
Thus, while traversing $P$, if we leave some set $V_t$, we can never return to $V_t$ (see Figure \ref{fig:wave}).
This implies that the intersection of $P$ with each of the sets $V_i$, $V_{i+1}$, and $V_{i+2}$ is either empty or a subpath of $P$, which concludes the proof of part (ii).

Finally, part (iii) follows by contracting $U_i$ into a single vertex $u_i$, and setting the spanning tree $T$ in the definition of layered graph to be the shortest path tree rooted at $u_i$.
\end{proof}

We need the following result from \cite{thorup2004compact}. 

\begin{lemma}[Thorup \cite{thorup2004compact}]\label{lem:thorup_3_paths}
Let $\Delta>0$, and let $H$ be a planar $(1,\Delta)$-layered digraph.
Then, there exist shortest paths $P_1,P_2,P_3$ in $H$, each having length at most $\Delta$, and such that
$V(P_1)\cup V(P_2)\cup V(P_3)$ is a $2/3$-balanced vertex separator in $H$.
Moreover, there exists a polynomial-time algorithm which given $H$ and $\Delta$ outputs $P_1$, $P_2$, and $P_3$.
\end{lemma}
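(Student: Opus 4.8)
The plan is to derive this as a direct consequence of the classical fundamental-cycle planar separator theorem of Lipton and Tarjan, applied to the directed rooted spanning tree that witnesses the $(1,\Delta)$-layered structure of $H$. Fix such a spanning tree $T$, rooted at a vertex $r$ and oriented away from $r$, so that every root-to-leaf path of $T$ is a single shortest path of $H$ of length at most $\Delta$. We may assume $|V(H)|\ge 3$, the remaining cases being trivial. First I would pass to the underlying undirected graph $\hat H$, which is connected (since $T$ spans $H$) and planar, and add edges to obtain a planar triangulation $\bar H$ with the same vertex set that still contains $T$ as a spanning tree. Adding edges keeps $T$ a rooted spanning tree and, crucially, any $2/3$-balanced vertex separator of $\bar H$ is also one of $\hat H$, hence of $H$, since deleting edges only breaks connected components into smaller ones.

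Next I would invoke the fundamental-cycle separator theorem on $\bar H$ with spanning tree $T$ and uniform vertex weights: there is a non-tree edge $e=\{u,w\}$ whose fundamental cycle $C_e$, consisting of $e$ together with the unique path from $u$ to $w$ in $T$, is such that each of the two open regions of the plane bounded by $C_e$ contains at most $\tfrac23|V(H)|$ vertices. Since every vertex of $\bar H$ lies on $C_e$, strictly inside $C_e$, or strictly outside $C_e$, the set $V(C_e)$ is a $2/3$-balanced vertex separator of $\bar H$, and therefore of $H$. This step is entirely standard, and $e$ (hence $C_e$) can be found in polynomial time by a search in the dual spanning tree formed by the non-tree edges.

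It then remains to express $V(C_e)$ as the union of the vertex sets of at most three shortest paths of $H$, each of length at most $\Delta$. Let $z$ be the lowest common ancestor of $u$ and $w$ in $T$; then the tree path from $u$ to $w$ has vertex set $V(T[z,u])\cup V(T[z,w])$, so $V(C_e)=V(T[z,u])\cup V(T[z,w])$. Now $u$ lies on some root-to-leaf path of $T$, of which $T[z,u]$ is a contiguous subpath; since that root-to-leaf path is, by the $(1,\Delta)$-layered hypothesis, a shortest path of $H$ of length at most $\Delta$, the optimal-substructure property of geodesics shows that its subpath $T[z,u]$ is again a shortest path of $H$, of length at most $\Delta$, and likewise for $T[z,w]$. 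Setting $P_1:=T[z,u]$, $P_2:=T[z,w]$, and $P_3$ the one-vertex path at $z$ (or a repeat of $P_1$, to match the statement literally) yields the three required shortest paths with $V(P_1)\cup V(P_2)\cup V(P_3)=V(C_e)$; the vertex $z$ and the subpaths are extracted in polynomial time. If one prefers the orientation of the layering in which tree edges point toward the root, the identical argument applies to $\Ginv$: a vertex separator of $\Ginv$ is a vertex separator of $H$, and the reverse of a shortest path of $\Ginv$ is a shortest path of $H$ on the same vertex set.

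The only genuinely delicate point is the interface between the undirected setting, where planarity and the fundamental-cycle separator naturally live, and the directed setting, where the $P_i$ must be honest directed shortest paths of $H$. This is reconciled because the separator is produced as a union of root-oriented tree paths of $T$, whose orientation is fixed in advance, and because the auxiliary moves (triangulating $\hat H$, or passing to $\Ginv$) preserve the vertex-separator property while affecting distances only in a controlled way: triangulation not at all on $T$-paths, and reversal merely by flipping geodesics. Everything else is a routine invocation of classical planar-separator machinery together with the fact, furnished by the $(1,\Delta)$-layered hypothesis, that subpaths of root-to-leaf tree paths are short geodesics of $H$.
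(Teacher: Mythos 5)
Your argument via the Lipton--Tarjan fundamental-cycle separator applied to a triangulation of the underlying undirected graph is correct, and it is essentially the proof Thorup gives in \cite{thorup2004compact}; the present paper cites this lemma from there without reproving it. Note that since $V(C_e)=V(T[z,u])\cup V(T[z,w])$ your construction actually produces only two directed shortest paths of length at most $\Delta$, which is consistent with (and slightly stronger than) the stated bound of three.
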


\begin{lemma}\label{lem:layer_lip}
Let $\Delta > 0$.
Let $H$ be a planar $(1,\Delta)$-layered digraph.
Then there exists a $\Delta$-bounded, $O(\log^2 n)$-Lipschitz quasipartition of the quasimetric space $(V(H), d_H)$.
\end{lemma}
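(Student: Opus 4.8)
The plan is to prove Lemma~\ref{lem:layer_lip} by recursion on the number of vertices of $H$, using the path-separator theorem (Lemma~\ref{lem:thorup_3_paths}) to break $H$ into smaller $(1,\Delta)$-layered pieces and the quasipartitioning procedure \PathQ{} (Lemma~\ref{lem:path_q}) to handle the neighborhood of each separator path. First I would set up the recursion: given $H$, invoke Lemma~\ref{lem:thorup_3_paths} to obtain three shortest paths $P_1,P_2,P_3$, each of length at most $\Delta$, whose union is a $2/3$-balanced vertex separator. For each $j\in\{1,2,3\}$ I run \PathQ$(H,P_j,\Delta)$ (applied in sequence, each time to the graph with the previously computed cutsets removed, or equivalently to $H$ itself and then taking the union of the three cutsets), obtaining a random cutset $C_j$. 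Let $C^{\mathrm{sep}}=C_1\cup C_2\cup C_3$. Removing $C^{\mathrm{sep}}$ together with the separator vertices $V(P_1)\cup V(P_2)\cup V(P_3)$ from $H$ leaves connected components $H_1,\dots,H_m$, each with at most $\tfrac23|V(H)|$ vertices and each still a $(1,\Delta)$-layered digraph (a subgraph of a layered graph is layered with respect to the restriction of the same spanning tree). Recursively compute a random cutset $C_\ell$ for each $H_\ell$; the final cutset is $C=C^{\mathrm{sep}}\cup\bigcup_\ell C_\ell$, and the output quasipartition $Z$ is the one induced by $C$.

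Next I would verify $\Delta$-boundedness. Suppose $(x,y)\in Z$, i.e.\ there is a path $R$ from $x$ to $y$ in $H\setminus C$. Either $R$ avoids all separator vertices in every recursive call along the way, in which case $x,y$ lie in the same leaf-level component whose diameter we must control; or $R$ meets some separator path $P_j$ at the first level where this happens. In the latter case, since $R\subseteq H\setminus C\subseteq H\setminus C^{\mathrm{sep}}$, part~(2) of Lemma~\ref{lem:path_q} (applied with $P=R$, noting $R$ intersects $P_j$) gives $d_H(x,y)\lesssim\Delta\log n$; this is the desired bound up to a rescaling of $\Delta$ by a $\Theta(\log n)$ factor, which is standard and only changes constants. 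To reach the clean $\Delta$-bounded statement one runs the whole construction with parameter $\Delta'=\Delta/c\log n$ for an appropriate constant $c$; this affects the Lipschitz constant only by a further $\log n$ factor, which I will absorb. For the base case where $R$ stays inside a single leaf component, the recursion bottoms out when the component has $O(1)$ vertices, so its diameter is $O(1)$ edges, each of length $\le\Delta$ (edges have length at most $\Delta$ in a $(1,\Delta)$-layered graph), again giving $O(\Delta)$ after rescaling.

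For the Lipschitz bound, fix an edge $(x,y)\in E(H)$ and bound $\Pr[(x,y)\notin Z]$. The recursion tree has depth $O(\log n)$ since each level reduces the number of vertices by a constant factor. The edge $(x,y)$ participates in at most one recursive subproblem at each level (the component containing both endpoints, or it gets cut earlier). At a given level, $(x,y)$ is cut either by one of the three \PathQ{} calls on the separator paths $P_1,P_2,P_3$, or in a deeper level. By part~(1) of Lemma~\ref{lem:path_q}, each \PathQ{} call cuts $(x,y)$ with probability $\lesssim d_H(x,y)/\Delta$, so a union bound over the three paths at a single level gives $O(d_H(x,y)/\Delta)$. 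Summing over the $O(\log n)$ levels of the recursion yields $\Pr[(x,y)\notin Z]\lesssim (\log n)\cdot d_H(x,y)/\Delta$. Combined with the rescaling $\Delta'=\Delta/\Theta(\log n)$ needed above for boundedness, the overall Lipschitz constant becomes $O(\log^2 n)$, as claimed. I would present this as: the random quasipartition obtained from the recursion with scale parameter $\Delta/(c\log n)$ is $\Delta$-bounded and $O(\log^2 n)$-Lipschitz.

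The main obstacle is the interaction between boundedness and the union bound over recursion levels: naively, a witnessing path $R$ for $(x,y)\in Z$ could weave through many components across many levels, and one must argue that the \emph{first} level at which $R$ touches a separator already pins down $d_H(x,y)$ via Lemma~\ref{lem:path_q}(2), rather than needing to track $R$ through the whole recursion. The key point making this work is that Lemma~\ref{lem:path_q}(2) is stated for an arbitrary path $P$ from $x$ to $y$ in $H\setminus C$ (not just a shortest path), so we may take $P=R$ directly; and the separator structure guarantees that if $x$ and $y$ end up related but in different leaf components, $R$ must cross some separator path, triggering the bound. A secondary subtlety is ensuring each $H_\ell$ is genuinely $(1,\Delta)$-layered — this needs the observation that restricting the layered spanning tree of $H$ to a connected component $H_\ell$ of $H\setminus(C^{\mathrm{sep}}\cup V(P_1\cup P_2\cup P_3))$ yields a valid layered spanning tree of $H_\ell$, which holds because $H_\ell$ is vertex-disjoint from the separator and hence the tree path to each vertex of $H_\ell$, truncated appropriately, stays within distance bounds; a little care is required here but it is routine.
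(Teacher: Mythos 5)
Your argument is essentially identical to the paper's: recurse via Thorup's three-shortest-path separator (Lemma~\ref{lem:thorup_3_paths}), run \PathQ{} on each of the three separator paths, take the union of all cutsets over all recursive calls, bound the Lipschitz constant by a union bound over the $O(\log n)$ recursion levels using part~(1) of Lemma~\ref{lem:path_q}, bound the diameter via part~(2) of Lemma~\ref{lem:path_q} applied to the first separator path that a witnessing path touches (with the $O(1)$-vertex base case handling paths that never touch a separator), and rescale by $\Theta(1/\log n)$ to convert $O(\Delta\log n)$-boundedness into $\Delta$-boundedness at the cost of one extra $\log n$ factor in the Lipschitz constant. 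The small deviations — removing the \PathQ{} cutset edges in addition to the separator vertices before recursing, and your explicit flagging of the need to check that the recursive components remain $(1,\Delta)$-layered (a point the paper's proof is silent on) — do not change the substance of the argument.
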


\begin{proof}
We construct a random cutset $C$ recursively.
If $H$ contains at most $c=O(1)$ vertices, then the result follows by letting $C$ be the set of all edges in $H$ of length at least $\Delta$, with probability $1$.
Otherwise, we proceed as follows.
By Lemma \ref{lem:thorup_3_paths}, we compute shortest paths $P_1,P_2,P_3$ in $H$, each of length at most $\Delta$, such that $V(P_1\cup P_2\cup P_3)$ is a $2/3$-balanced vertex separator in $H$.
For each $i\in \{1,2,3\}$, we run procedure \PathQ$(H, P_i, \Delta)$, and obtain a random cutset $C_i\subseteq E(H)$.
We then recurse on each connected component of $H\setminus (P_1\cup P_2\cup P_3)$.
Let ${\cal P}$ be the set of all paths computed by Lemma \ref{lem:thorup_3_paths} in all recursive calls.

Let also ${\cal C}$ be the set of all random cutsets computed during all recursive calls.
We set
\[
C = \bigcup_{C'\in {\cal C}} C',
\]
and we output the quasipartition $R$ induced by $C$.
This completes the description of the algorithm.

\medskip

We next show that $R$ is $O(\log n)$-Lipschitz.
Let $(u,v)\in E(H)$.
By construction, we have that $(u,v)$ is contained in at most one subgraph at each level of the recursion.
Since the depth of the recursion is $k=O(\log n)$, it follows that $(u,v)$ is contained in at most $k$ subgraphs $H_1,\ldots,H_k$ of $H$, with 
\[
H = H_1 \supseteq H_2 \supseteq \ldots \supseteq H_k.
\]
For each $i\in \{1,\ldots,k\}$ let $P_{i,1},P_{i,2},P_{i,3}$ be the shortest paths that the algorithm computes in $H_i$, such that $V(P_1\cup P_2\cup P_3)$ is a $2/3$-balanced separator in $H_i$.
For each $i\in \{1,\ldots,k\}$ let $C_i$ be the random cutset that the algorithm computes in $H_i$.
We have
\begin{align}
\Pr[(u,v)\notin R] &\leq \Pr[(u,v)\in C] \notag \\
 &= \Pr\left[(u,v)\in C_1\cup \ldots \cup C_k\right] \notag \\
 &\leq \sum_{i=1}^k \Pr[(u,v)\in C_i]  & \text{(union bound)} \notag \\
 &\lesssim \sum_{i=1}^k  \frac{d_{H_i}(u,v)}{\Delta} & \text{(part (1) of Lemma \ref{lem:path_q})} \notag \\
 &= \sum_{i=1}^k  \frac{d_{H}(u,v)}{\Delta} & \text{(since $(u,v)\in E(H_1)\cap \ldots \cap E(H_k)$)} \notag \\
 &\lesssim \log n   \frac{d_{H}(u,v)}{\Delta},  & \text{($k\lesssim \log n$)} \notag
\end{align}
and thus $R$ is $O(\log n)$-Lipschitz.


It remains to show that $R$ is $O(\Delta \log n)$-bounded.
Let $x,y\in V(H)$, such that $(x,y)\in R$.
This implies that there exists some path $L$ from $x$ to $y$ in $H\setminus C$.
Let 
\[
X=\bigcup_{P\in {\cal P}} V(P).
\]
If $V(L)\cap X = \emptyset$, then by the basis of the recursion we have that $L$ can contain at most $c=O(1)$ edges, each of length less than $\Delta$, and thus $d_H(x,y) = O(\Delta)$.
Otherwise, $V(L)\cap X \neq \emptyset$.
This implies that there exists some path $P\in {\cal P}$ such that $L$ intersects $P$.
By part (2) of Lemma \ref{lem:path_q} it follows that $d_H(x,y)\lesssim \Delta \log n$.
Thus, in all cases we obtain that $d_H(x,y)\lesssim \Delta \log n$, which implies that $R$ is $O(\Delta \log n)$-bounded, concluding the proof.
To obtain a $\Delta$-bounded, $O(\log^2 n)$-Lipschitz quasipartition, we just scale the quasi-metric space $(V(H), d_H)$ by a factor of $\Theta(1/\log n)$.
\end{proof}

\section{Lipschitz Quasipartitions for Planar Digraphs}
\label{sec:lip}

We now prove our main result on Lipschitz quasipartitions, which is the main technical ingredient of this paper.
We begin by describing the algorithm for computing a quasipartition, and we then prove that it satisfies the desired properties.

The input to the algorithm is a directed planar graph $G$, and some parameter $\Delta>0$.
The algorithm proceeds in the following steps:

\begin{description}
\item{\textbf{Step 1: Sampling from a wave.}}
Pick an arbitrary $v\in V(G)$.
Let ${\cal D}$ be a $(v, \Delta)$-wave, and let $Q'$ be a quasipartition sampled according to ${\cal D}$, induced by some cutset $E'\subseteq E(G)$; that is $Q'\sim {\cal D}$.
Let $\{V_i\}_{i\in \mathbb{N}}$ be as in Definition \ref{defn:wave}.
For any $i$, let $G_i$ be a $(1,\Delta)$-layered graph (with probability 1) obtained by part (iii) of Lemma \ref{lem:wave}.

\item{\textbf{Step 2: Quasipartitioning each layer.}}
By Lemma \ref{lem:layer_lip} it follows that for each $i$, there exists a polynomial-time samplable $\Delta/3$-bounded $O(\log^2 n)$-Lipschitz quasipartition ${\cal D}_i$ of $G_i$.
In polynomial time, we sample some $Q_i\sim {\cal D}_i$, induced by some cutset $F_i$.

\item{\textbf{Step 3: Output.}}
Let 
$F = E' \cup \left(\bigcup_{i} F_i\right)$.
The final output is the quasipartition $R$ induced by the cutset $F$.

\end{description}

We are now ready to prove the main result of this Section.

\begin{proof}[Proof of Theorem \ref{thm:main_lip}]
Let $F$ be the cutset computed by the algorithm described above, and let $R$ be the induced quasipartition.
Let also ${\cal D}$, $E'$, $G_i$, ${\cal D}_i$, $F_i$ be as in the description of the algorithm.

We need to bound the probability that any ordered pair of vertices is not in $R$.
First, consider some $(u,v)\in E(G)$.
By part (i) of Lemma \ref{lem:wave} we have that 
\begin{align}
\Pr[(u,v)\in E'] &\lesssim  \frac{d_G(u,v)}{\Delta} \label{eq:final_cut1}
\end{align}
With probability 1, there exists at most one $i\in \mathbb{N}$, such that $(u,v)\in E(G_i)$.
Conditioned on the event that $(u,v)\in E(G_i)$, by Lemma \ref{lem:layer_lip} we have
\begin{align}
\Pr[(u,v)\in F_i] &\lesssim \log^2 n \frac{d_G(u,v)}{\Delta} \label{eq:final_cut2}
\end{align}
Combining \eqref{eq:final_cut1}, \eqref{eq:final_cut2}, and the fact that $(u,v)$ can be in at most one graph $G_i$, it follows by the union bound that
\begin{align}
\Pr[(u,v)\in F] &= \Pr\left[(u,v)\in E' \cup \left(\bigcup_i F_i\right)\right] \notag \\
&= \Pr\left[\left((u,v)\in E'\right) \vee \left((u,v)\in \bigcup_i F_i\right)\right] \notag \\
&\leq \Pr\left[(u,v)\in E'\right] + \Pr\left[(u,v)\in \bigcup_i F_i\right] & \text{(union bound)} \notag \\
&\lesssim \log^2 n \frac{d_G(u,v)}{\Delta}. \label{eq:final_cut3}
\end{align}

We have thus established that the probability of cutting any particular edge is low.
We next extend this bound to arbitrary ordered pairs of vertices.
Let $u',v'\in V(G)$.
If $v'$ is not reachable from $u'$ in $G$, the condition holds vacuously.
Thus, we may assume w.l.o.g.~that $v'$ is reachable from $u'$ in $G$.
Let $P$ be a shortest path from $u'$ to $v'$ in $G$.
Let $P=w_1,\ldots,w_t$, with $w_1=u'$, $w_t=v'$.
We have that if $(u',v')\notin R$, then at least one edge of $P$ must be in $F$, since otherwise $v'$ is reachable from $u'$ in $G\setminus F$, which implies $(u',v')\in R$. 
Thus, we have
\begin{align}
\Pr[(u',v')\notin R] &\leq \Pr[E(P)\cap F \neq \emptyset] \notag \\
 &= \Pr\left[\left(\bigcup_{i=1}^{t-1} (w_i,w_{i+1})\right) \cap F \neq \emptyset \right] \notag \\
 &= \Pr\left[\bigvee_{i=1}^{t-1}\left((w_i,w_{i+1}) \cap F \neq \emptyset \right) \right] \notag \\
 &\leq \sum_{i=1}^{t-1} \Pr\left[(w_i,w_{i+1}) \cap F \neq \emptyset \right] & \text{(union bound)} \notag \\
 &\lesssim \log^2 n \sum_{i=1}^{t-1} \frac{d_G(w_i,w_{i_1})}{\Delta} & \text{(by \eqref{eq:final_cut3})} \notag \\
 &= \log^2 n \frac{d_G(u',v')}{\Delta} & \text{($P$ is a $u'v'$ shortest path)} \notag
\end{align}
Thus, we have established that the probability that any ordered pair is not contained in the quasipartition is bounded, as required.

It remains to show that $R$ is $\Delta$-bounded.
To that end, it suffices to show that for any $a,b\in V(G)$, if $d_G(a,b)>\Delta$, then $(a,b)\notin R$.
In order to prove the latter property, it suffices to show that for any path $Z$ from $a$ to $b$, at least one edge of $Z$ is in the cutset $F$; that is, $E(Z)\cap F \neq \emptyset$.
Let $Z'$ be prefix of $Z$ of length $\Delta$ (we may assume that $Z'$ has length exactly $\Delta$ by subdividing one edge and inserting a vertex on $Z$ at distance exactly $\Delta$ from $u$).
By part (ii) of Lemma \ref{lem:wave}  we have that, with probability 1, there exists some $i\in \mathbb{N}$, such that $Z'$ can be decomposed into three subpaths $Z'=Z'_1\circ Z'_2\circ Z'_3$, such that 
each $Z'_t$ is contained in some $G_s$.
By setting $Z''$ to be the longest of the three paths $Z'_1$, $Z'_2$ and $Z'_3$, we conclude that there exists some subpath $Z''$ of $Z'$, of length strictly greater than $\Delta/3$, and there exists some $j\in \mathbb{N}$, such that $Z''\subseteq G_j$.
Since, by construction, ${\cal D}_j$ is $\Delta/3$-bounded, it follows that $E(Z'')\cap F_i\neq \emptyset$.
Since $F_i\subseteq F$, and $E(Z'')\subseteq E(Z)$, we get that $E(Z)\cap F\neq \emptyset$, which concludes the proof.
\end{proof}

\bibliographystyle{abbrv}
\bibliography{bibfile}

\appendix

\section{Analysis of Random Exponential Quasiballs}
\label{app:exp_balls}.

\begin{proof}[Proof of Lemma \ref{lem:exp_balls}]
The proof is a direct adaptation of the argument of Bartal~\cite{bartal1996probabilistic}, which holds for the case of metric spaces.
Here, we observe that the same analysis holds for the directed variant we have defined.

Fix some $(u,v)\in E(H)$.
We will derive an upper bound on the probability that $(u,v)$ is included in the cutset $F$.
Let $i\in \{1,\ldots,t\}$.
If $d_H(v_i, u)>d_H(v_i, v)$, then the probability that $(u,v)$ is included at the cutset while computing $X_i$ is 0.
Therefore, we may assume w.l.o.g.~that $d_H(v_i, u)\leq d_H(v_i, v)$.
Following Bartal~\cite{bartal1996probabilistic}, we define the following events:
\begin{description}
\item{$A_i$:} $u,v\in V(H) \setminus (B_{i-1}\cup \ldots \cup B_i)$, where we define $B_0=\emptyset$.
That is, none of $u$ and $v$ have been included in any $X_j$, for any $j<i$.

\item{$M_i^I$:}
$d_H(v_i,v)\leq R_i$, conditioned on $A_i$.
That is, this is the event that both $u$ and $v$ are included in $B_i$, conditioned on $A_i$.

\item{$M_i^X$:}
$d_H(v_i, u)\leq R_i < d_H(v_i, v)$, conditioned on $A_i$.
That is, this is the event that $(u,v)$ is cut when computing $B_i$.

\item{$M_i^N$:}
$R_i < d_H(v_i, u)$, conditioned on $A_i$.
That is, this is the event that none of the vertices $u$ and $v$ is included in $X_i$, conditioned on $A_i$.

\item{$N_i$:}
The event that for all $j\in \{i+1,\ldots,t\}$, we have $|\{u,v\}  \cap  B_j| \leq 1$, conditioned on $A_i$.
That is, the vertices $u$ and $v$ are not both included in the same cluster $B_j$, for all $j>i$, conditioned on $A_i$.
\end{description}

From \cite{bartal1996probabilistic}, we have
\begin{align}
\Pr[M_i^X] &= \int_{d_H(v_i,u)}^{d_H(v_i,v)} p(x) \dd x 
= \frac{n}{n-1} \cdot \left(1-e^{-\frac{d_H(v_i,v)-d_H(v_i,u)}{\Delta}}\right) e^{-d_H(v_i,v)/\Delta} \notag \\
 &\leq \frac{n}{n-1} \cdot \frac{d_H(u,v)}{\Delta} e^{-d_H(v_i,v)/\Delta}. \label{eq:bartal_1}
\end{align}
Moreover
\begin{align}
\Pr[M_i^N] &= \int_{0}^{d_H(v_i,u)} p(x) \dd x = \frac{n}{n-1} \cdot \left(1 - e^{-d_H(v_i, u)/\Delta}\right). \label{eq:bartal_2}
\end{align}

Following \cite{bartal1996probabilistic}, we 
now prove by induction on $i$ that
\[
\Pr[N_i] \leq \left(2-\frac{i}{n-1}\right) \frac{d_H(u,v)}{\Delta}.
\]
We have
\begin{align}
\Pr[N_i] &= \Pr[M_i^X] + \Pr[M_i^N] \Pr[N_{i+1}] \notag \\
 &\leq \frac{n}{n-1} \cdot \frac{d_H(u,v)}{\Delta} e^{-d_H(v_i, u)/\Delta} + \frac{n}{n-1} \left(1- e^{-d_H(v_i,v)/\Delta}  \right) \left(2 - \frac{i+1}{n}\right) \frac{d_H(u,v)}{\Delta} \notag \\
 &\leq \left(2-\frac{t}{n-1}\right) \frac{d_H(u,v)}{\Delta},
\end{align}
which completes the induction.
We have thus established that 
\[
\Pr[(u,v)\in F] \leq \Pr[N_0] \leq 2 d_H(u,v)/\Delta,
\]
which concludes the proof.
\end{proof}

\end{document}